\pgfplotsset{compat=newest}
\pgfplotsset{plot coordinates/math parser=false}
\newlength\figureheight
\newlength\figurewidth
\theoremstyle{plain}
\newtheorem{thm}{\protect\theoremname}
  \theoremstyle{remark}
  \newtheorem{rem}{\protect\remarkname}
\DeclareMathOperator{\rank}{rank}
\DeclareMathOperator{\tr}{tr}
\DeclareMathOperator{\diag}{diag}
\DeclareMathOperator*{\mini}{min}
\DeclareMathOperator*{\maxi}{max}
\DeclareMathOperator*{\st}{s.t.}
\providecommand{\remarkname}{Remark}
\providecommand{\theoremname}{Theorem}
\begin{document}
\title{An Efficient Precoder Design for Multiuser MIMO Cognitive Radio Networks with Interference Constraints
}
\author{
\IEEEauthorblockN{Van-Dinh~Nguyen,~Le-Nam~Tran, \textit{Member, IEEE},~Trung~Q.~Duong, \textit{Senior Member, IEEE,}\\~Oh-Soon~Shin, \textit{Member, IEEE},~and~Ronan~Farrell, \textit{Member, IEEE}}
\thanks{V.-D.~Nguyen and O.-S.~Shin are with the School of Electronic Engineering, Soongsil University, Seoul 06978, Korea (e-mail: \{nguyenvandinh, osshin\}@ssu.ac.kr).}
\thanks{T.~Q.~Duong is with the School of Electronics, Electrical Engineering and Computer Science, Queen's University Belfast, Belfast BT7 1NN, United Kingdom (e-mail: trung.q.duong@qub.ac.uk).}
\thanks{L.-N.~Tran and R.~Farrell are with the Department of Electronic Engineering, Maynooth University, Ireland (e-mail: \{ltran, rfarrell\}@eeng.nuim.ie).
}
}
\maketitle
\begin{abstract}
We consider a linear precoder design for an underlay cognitive radio multiple-input multiple-output broadcast channel, where the secondary system consisting of a secondary base-station (BS) and a group of secondary users (SUs) is allowed to share the same spectrum with the primary system. All the transceivers are equipped with multiple antennas, each of which has its own maximum power constraint. Assuming zero-forcing method to eliminate the multiuser interference, we study the sum rate maximization problem for the secondary system subject to both per-antenna power constraints at the secondary BS and the interference power constraints at the primary users. The problem of interest differs from the ones studied previously that often assumed a sum power constraint and/or single antenna employed at either both the primary and secondary receivers or the primary receivers.
To develop an efficient numerical algorithm, we first invoke the rank relaxation method to transform the considered problem into  a convex-concave problem based on a downlink-uplink result. We then propose a barrier interior-point method to solve the resulting saddle point problem.
In particular, in each iteration of the proposed method we find the Newton step by solving a system of discrete-time Sylvester equations, which help reduce the complexity significantly, compared to the conventional method.  Simulation results are provided to demonstrate  fast convergence and effectiveness of the proposed algorithm.
\end{abstract}
\begin{IEEEkeywords}MIMO, broadcast channel, beamforming, cognitive radio, zero-forcing, sum rate maximization.
\end{IEEEkeywords}

\section{Introduction}
Radio frequency spectrum has recently become a scarce and expensive wireless resource due to 	the ever increasing demand of multimedia services. Nevertheless,  it has been reported in \cite{SpectrumPolicy} that the majority of  licensed users are idle at any  given time and location. To significantly improve spectrum utilization, cognitive radio (CR) is widely considered as a promising approach. There are two famous CR models, namely opportunistic spectrum access model and spectrum sharing model. In the former, the secondary users (SUs, unlicensed users) are allowed to use the frequency bands of the primary users (PUs, licensed users) only when these bands are not occupied \cite{Mitola:99,Haykin:05,Gan:VT:12,Peh:VT:09,He:VT:10}.  In the latter, the PUs have prioritized access to the available radio spectrum, while the SUs have restricted access and  need to avoid causing detrimental interference to the primary receivers \cite{Wang:VT:14,Hua:VT:14, Zhang:JBPA:08}. Towards this end, several powerful techniques such as spectrum sensing and beamforming design, have been used to protect the PUs from the interference from the SUs and to meet quality-of-service (QoS) requirements \cite{Khan:VT:12, Sun:VT:15, Gharavol:VT:10,Zhang:08}.

It is well-known that  transmit beamforming has enormous potential to improve the capacity of wireless communication systems without requiring extra bandwidth or transmit power. In fact, a popular method to control the interference that has been widely used in the literature is based on zero-forcing (ZF) technique, which places null spaces at the beamforming vector of each co-channel receiver. Specifically, successive zero-forcing dirty paper coding (SZF-DPC) was introduced in \cite{Caire:03} for single-antenna receivers and was extended  to multiple-antenna receivers in \cite{Dabbagh:07}. Inspired by these two works,  the authors in \cite{Tran:MISO:13,Tran:MIMO:13} proposed  efficient numerical methods based on a barrier method to solve the throughput maximization problem. These approaches were shown to have a superior convergence behavior compared to a two-stage iterative method based on the dual subgradient method \cite{Zhang:BDCooperative:2010}.

The throughput maximization of a CR  network was optimally solved in \cite{Zhang:08}, which 	considered a single secondary multiple-input multiple-output (MIMO)/multiple-input single-output (MISO) link under the constraint of opportunistic spectrum sharing. Multiple antennas are exploited at the secondary transmitter to optimally trade off between attaining the maximal throughput and meeting the interference threshold at the primary receivers.  The throughput of the SU link was studied with instantaneous or average interference-power constraint at a single PU in \cite{Ghasemi:07}. When multiple SUs access a single-frequency which is licensed to one PU,  a QoS constraint for each secondary link is solved efficiently  using a geometric programming method   under the interference-power constraint at some measured point \cite{Huang:06, Xing:07}.  The authors in \cite{Tajer:BRA:10} considered maximizing the smallest weighted rate and proposed distributed algorithms for optimal beamforming and rate allocation.  The channel capacity of CR networks with two users was presented in \cite{Jafar:07}, where  the cognitive user is assumed to know the message
of the primary user non-causally prior to transmissions. The impact of average interference power  and peak interference power constraints on the  ergodic capacity of CR networks was compared in \cite{Zhang:WC:09}. In \cite{Stotas:WC:12}, an enhanced spectrum sensing scheme was proposed  to improve the optimal power allocation strategy in CR networks. In parallel lines, beamforming design technique was applied to control the total amount of interference caused by the secondary transmitter, thereby enhancing the channel capacity of CR MIMO networks \cite{Rezaei:VT:15,Luan:VT:12}.

The sum rate (SR) maximization problem of CR has been a classical one  in wireless system design \cite{Zhang:WSR:09,Gallo:WSR:11,KIM:ORA:11,Jitvan:BPC:09}.  In particular,  the authors in \cite{Zhang:WSR:09} considered the weighted sum rate (WSR) maximization problem for CR multiple-SUs MIMO broadcast channel (BC) under the sum power constraint and the interference power constraints. The problem is shown to be a nonconvex problem but can be transformed into an equivalent convex CR MIMO multiple access channel (MAC) problem via the general BC-MAC duality result.  The problem of WSR maximization for multiple MISO SUs in the presence of multiple primary receivers was addressed in \cite{Gallo:WSR:11}, where  an iterative algorithm based on the subgradient method was proposed to determine the beamforming vectors. For the MIMO ad hoc CR network   considered in  \cite{KIM:ORA:11}, a semidistributed algorithm and a centralized algorithm based on geometric programming and network duality were introduced under the interference constraint at the primary receivers in order to obtain a locally optimal linear precoder for the non-convex WSR maximization problem. In \cite{Jitvan:BPC:09}, the authors investigated a CR multiple-antenna two-way relay network, where  optimal relay beamforming and suboptimal beamforming vectors were obtained based on the subspace projection and power control algorithms that are proposed to satisfy the power budget as well as the interference power constraints. 	 However, all of these algorithms generally have high computational complexity  and slow convergence.

Very recently, the SR maximization  for CR MISO-BC with a very large number of antennas at the secondary transmitter was studied in \cite{He:SRM:14}. For the general case, a low-complexity suboptimal beamforming scheme based on partially-projected regularized ZF beamforming was applied to obtain a closed-form beamformer, and then  deterministic approximations for large system analysis were also derived. The authors in \cite{Zheng:MSR:14, Lai:BDA:15} studied a WSR maximization problem with a collection of multiple PUs and SUs  sharing a common frequency-flat fading channel.
 Under the interference power and transmit power constraint,   a reformulation-relaxation technique was proposed to solve a convex optimization problem over a closed bounded convex set in the rate domain to bound the optimal value, and a branch-and-bound algorithm is used to find a globally optimal solution to the WSR maximization problem. Most of prior research on beamforming design for CR networks assumed that  single antenna is employed at either both the primary  and secondary receivers or the primary receiver. To the best of  our knowledge, the SR maximization problem in more general cases for CR MIMO-BC networks in the presence of multiple SUs and  PUs is still an open research and has not been investigated previously.

In this paper we consider a CR network where the secondary system consisting  of a multiantenna BS sends data to multiple users on its downlink channel while satisfying the interference limit that it may cause to the legitimate users of the primary system. We note that the capacity of the downlink transmission of the secondary system can be achieved through DPC \cite{Weingarten:CSG:06}. However, DPC
is difficult to implement in practice since it is a nonlinear precoding technique with high implementation complexity. In this paper we  adopt ZF precoding which has been widely used in the literature thanks to its simplicity and effectiveness at the secondary base station (BS), and consider the SR maximization problem subject to per-antenna power constraints (PAPC) at secondary BS and the interference power constraint at PUs.
By a standard rank relaxation method, the precoder design problem can be cast as a semidefinite program which can be solved by generic conic solvers such as SeDuMi \cite{Sturm} or SDPT3 \cite{Toh}. However, we do not follow such approach for the following two reasons. First, it provides little useful insights into the structure of the optimal precoder. Second, its computational complexity is generally very high since specifications of the considered problem are not exploited.
In this paper,  our contributions include the followings
\begin{itemize}
  \item We reformulate the SR maximization problem for the considered CR MIMO system subject to PAPCs and interference constraints into a concave-convex (also known as minimax) program by extending  a BC-MAC duality result. To do this we first apply a standard rank relaxation method and then prove that the relaxation is tight. In particular, we  show that the PAPCs and interference constraints will become equality constraints in the derived convex-concave program.
  \item We customize the barrier method to find a saddle point (i.e., an optimal solution) of the convex-concave program. The proposed algorithm is  basically an iterative Newton-type method which is customized to exploit the special features of the design problem. Explicitly, in each iteration to find the Newton step, we arrive at a system of Sylvester equations which  is less complex to solve, compared to a generic method based on solving a system of linear equations.
  \item We provide extensive numerical results to justify the novelty of the algorithm and compare its performance with suboptimal solutions. In particular,  the numerical results demonstrate a superlinear convergence rate of the proposed algorithm and superior performance on achievable sum rate over the suboptimal solutions.
\end{itemize}

  The rest of the paper is organized as follows. System model and the formulation of the SR maximization problem are described in Section \ref{Systemmodel}. In Section \ref{Proposedalgorithm}, we present the proposed algorithm to solve the considered problem. Numerical results are provided in Section \ref{Numericalresults}, and Section \ref{Conclusion} concludes the paper.

\emph{Notations}: Bold lower and upper case letters represent vectors and matrices, respectively; $\mathbf{H}^{H}$ and $\mathbf{H}^{T}$ are Hermitian and normal transpose of $\mathbf{H}$, respectively; $\tr(\mathbf{H})$ and $|\mathbf{H}|$ are the trace and the determinant of $\mathbf{H}$, respectively; $\mathbf{I}_N$ represents an $N\times N$ identity matrix. $[\mathbf{x}]_i$ is the $i$th entry of vector $\mathbf{x}$. $[\mathbf{H}]_{i,j}$ is the entry at the $i$th row and $j$th column of $\mathbf{H}$. $\mathbf{e}_{n}$ is the $n$th unit vector (all entries are zero except for the  $n$th element which is $1$). $\diag(\mathbf{x})$, where $\mathbf{x}$ is a vector, denotes a diagonal matrix with diagonal elements of $\mathbf{x}$. The notation $\mathbf{X}\succeq\mathbf{0}$ represents that $\mathbf{X}$ is a positive semidefinite matrix. $\mathcal{N}(\mathbf{H})$ denotes the null space of $\mathbf{\mathbf{H}}$ and  $\lambda_{\max}(\mathbf{X}) $ is the maximum eigenvalue of $\mathbf{X}$.

\section{System Model and Problem Formulation}\label{Systemmodel}

\begin{figure}[t]
\centering
\includegraphics[trim=0.00cm 0cm 0.0cm 0.0cm,width=0.35\textwidth]{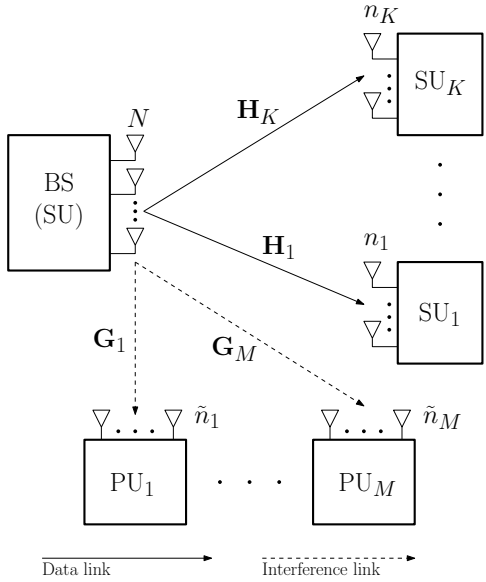}
\caption{A cognitive network model with multiple SUs and PUs.}
\label{fig:systemmodel}
\end{figure}
We consider a cognitive transmission scenario where an $N$-antenna secondary BS sends data to $K$   SUs in the presence of $M$  PUs as shown in Fig. \ref{fig:systemmodel}.  The secondary system is allowed to share the same spectrum  band licensed to the primary system. In an underlay cognitive network, the secondary BS must adapt its transmit power to satisfy an interference power constraint at the $m$th PU which is denoted by $I_m$, for $m = 1, 2,\cdots, M$. The numbers of receive antennas at the $k$th SU and the $m$th PU are denoted by $n_{k}$ and $\tilde{n}_m$, respectively, and their channel matrices (to the secondary BS) are represented by $\mathbf{H}_{k}\in\mathbb{C}^{n_{k}\times N},\; \text{for}\; k=1,\cdots, K$, and $\mathbf{G}_{m}\in\mathbb{C}^{\tilde{n}_{m}\times N},\; \text{for}\; m=1,\cdots, M$, respectively. We  assume that $\mathbf{H}_{k},\;\forall k$ and $\mathbf{G}_{m},\;\forall m$ remains constant during a transmission block and change independently from one block
to another. We further assume that the channel matrices can be perfectly known at the secondary BS, using a genie added feedback.  Though this assumption is quite ideal,  it has been considered in  \cite{He:SRM:14,Gallo:WSR:11,Zhang:WSR:09} to study their problems of interest in the context of CR networks. In reality, perfect channel estimation is hardly achieved and thus the results  obtained in this paper may act as an upper bound on the SR performance for the secondary transmission in an underlay CR network.

In the considered system, linear precoding is  employed at the secondary BS to transmit data to the SUs. Specifically,  the vector of transmitted symbols of the $k$th SU, denoted by  ${\rm {\bf x}}_{k}$, is multiplied by the precoder $\mathbf{T}_{k}\in\mathbb{C}^{N\times L_{k}}$, $L_k\leq\min(N, n_k)$, before being transmitted.  We assume that $E[\mathbf{x}_{k}\mathbf{x}_{k}^{H}]=\mathbf{I}$. In this way, the complex baseband transmitted signal at the secondary BS can be expressed as
\begin{equation}\label{eq:basebandsinal}
\mathbf{x} = \sum_{k=1}^{K}\mathbf{T}_k\mathbf{x}_k
\end{equation}
and the received signal at the $k$th SU is given by
\begin{equation}
\mathbf{y}_{k}=\mathbf{H}_{k}\mathbf{T}_{k}\mathbf{x}_{k}+\sum_{j\neq k}\mathbf{H}_{k}\mathbf{T}_{j}\mathbf{x}_{j}+\mathbf{z}_{k}\label{eq:signal:model}
\end{equation}
where $\mathbf{z}_{k}\in\mathbb{C}^{n_{k}\times1}$ is the background noise with distribution $\mathcal{CN}(\mathbf{0},\mathbf{I}_{n_{k}})$.  According to ZF precoding, we need to design $\mathbf{T}_{j}$ such that $\mathbf{H}_{k}\mathbf{T}_{j}=\mathbf{0}$
for all $j\neq k$. Consequently, the problem of SR maximization for cognitive transmission is mathematically formulated as
\begin{IEEEeqnarray}{rCl} \label{eq:BD:Sumrate:PAPC}
\underset{\{\mathbf{T}_{k}\}}{\maxi}\quad &  & \sum_{k=1}^{K}\log|\mathbf{I}+\mathbf{H}_{k}\mathbf{T}_{k}\mathbf{T}_{k}^{H}\mathbf{H}_{k}^{H}| \IEEEyesnumber \IEEEyessubnumber\label{eq:obj:2}\\
\st\quad &  & \mathbf{H}_{k}\mathbf{T}_{j}=\mathbf{0},\:\forall j\neq k\IEEEyessubnumber\label{eq:zfcondi}\\
 &  & \sum_{k=1}^{K}[\mathbf{T}_{k}\mathbf{T}_{k}^{H}]_{n,n}\leq P_{n},\;\forall n\in\mathcal{N}\IEEEyessubnumber\label{eq:PAPC}\\
 &  & \sum_{k=1}^{K}\tr(\mathbf{G}_{m}\mathbf{T}_{k}\mathbf{T}_{k}^{H}\mathbf{G}_{m}^{H})\leq I_{m},\;\forall m\in\mathcal{M}\IEEEyessubnumber\label{eq:threshold}
\end{IEEEeqnarray}
where $\mathcal{N}\triangleq\{1, 2,\cdots, N\}$ and $\mathcal{M}\triangleq\{1, 2,\cdots, M\}$. Here, 	the constraint in \eqref{eq:PAPC} is the  power constraints for the $n$th antenna at the secondary BS. We remark that an antenna  is often equipped with its own power amplifier (PA). Thus, we may need to limit the maximum transmit power on each antenna for it to operate within the linear region of the PA, which is more power efficient \cite{Zhang-J, YU}. The per-antenna power constraints (PAPCs) in \eqref{eq:PAPC}  different from those considered in \cite{Zhang:WSR:09,He:SRM:14}, but we note that the proposed solution introduced in this paper also applies to the sum power constraint (SPC) after slight modifications. This will be elaborated in the Appendix. In fact the interference from the secondary BS to a particular PU is a matrix, which is non-degrading.  This is different from the case of PUs with single antenna in which  the interference is a scalar, referred to as interference temperature \cite{Zhang:WSR:09}. The constraints in \eqref{eq:threshold} indicate that the sum of all eigenvalues of the resulting interference matrix should be less than a predetermined interference threshold $I_m$ at the $m$th PU.  We can rewrite \eqref{eq:threshold} as                    $\sum_{i=1}^{\tilde{n}_{m}}\sum_{k=1}^{K}\boldsymbol{g}_{m,i}\mathbf{T}_k\mathbf{T}_k^{H}\boldsymbol{g}_{m,i}^{H}\leq I_{m},$
where $\boldsymbol{g}_{m,i}\in\mathbb{C}^{1\times N} $ denotes the channel from the secondary BS to the $i$th receive antenna of the $m$th PU. We note that the term $ \sum_{k=1}^{K}\boldsymbol{g}_{m,i}\mathbf{T}_k\mathbf{T}_k^{H}\boldsymbol{g}_{m,i}^{H}$ represents the interference temperature at the $i$th antenna of the $m$th PU. In this way, \eqref{eq:threshold} implies that the sum of the interference temperatures of all antennas at the $m$th PU should be smaller than or equal to a predetermined threshold.
There are possibly other ways to control the interference generated by the secondary system. For example, we may force the maximum eigenvalue or the determinant (equivalent to the product of eigenvalues) of the interference matrix to be smaller than a threshold.  We remark that all these ways of controlling the interference term are the same for single-antenna PUs.

To further  simplify \eqref{eq:BD:Sumrate:PAPC},
let $\mathbf{\bar{H}}_{k}=[\mathbf{H}_{1}^{T}\cdots\mathbf{H}_{k-1}^{T}\,\mathbf{H}_{k+1}^{T}\cdots\mathbf{H}_{K}^{T}]^{T}\in\mathbb{C}^{(n_{R}-n_{k})\times N},$
where $n_{R}=\sum_{k=1}^{K}n_{k}$ is the total number of receive
antennas. The condition for \eqref{eq:zfcondi} to be feasible is that $\mathcal{N}(\bar{\mathbf{H}}_{k})$ has a dimension larger than zero, i.e., $n_{R}-n_{k}>0$, for all $k$. In this paper, we assume that the problem design in \eqref{eq:BD:Sumrate:PAPC} is feasible, which can be met when $N-\sum_{i\neq k}n_i\geq n_k$ and the channel matrices of the SUs and PUs have sufficiently low correlation degree. Let $\mathbf{\bar{V}}_{k}$ be the null space of $\bar{\mathbf{H}}_{k}$, then $\mathbf{\bar{V}}_{k}\in\mathbb{C}^{N\times\bar{n}_{k}}$ is a basis of $\mathcal{N}(\bar{\mathbf{H}}_{k})$, where $\bar{n}_{k}=N-\sum_{i\neq k}n_{i}$. Then we can write $\mathbf{T}_{k}=\bar{\mathbf{V}}_{k}\bar{\mathbf{T}}_{k}$, where $\bar{\mathbf{T}}_k$ are the solutions to the following problem
\begin{IEEEeqnarray}{rCl}\label{eq:BD:Sumrate:PAPC:ZFremoved}
\underset{\{\bar{\mathbf{T}}_{k}\}}{\maxi}\quad& & \sum_{k=1}^{K}\log|\mathbf{I}+\mathbf{H}_{k}\bar{\mathbf{V}}_{k}\bar{\mathbf{T}}_{k}\bar{\mathbf{T}}_{k}^{H}\bar{\mathbf{V}}_{k}^{H}\mathbf{H}_{k}^{H}| \IEEEyesnumber \IEEEyessubnumber\\
\st\quad& & \sum_{k=1}^{K}[\bar{\mathbf{V}}_{k}\bar{\mathbf{T}}_{k}\bar{\mathbf{T}}_{k}^{H}\bar{\mathbf{V}}_{k}^{H}]_{n,n}\leq P_{n},\:\forall n\in\mathcal{N}\IEEEyessubnumber\\
 & &\sum_{k=1}^{K}\tr(\tilde{\mathbf{G}}_{mk}\bar{\mathbf{T}}_{k}\bar{\mathbf{T}}_{k}^{H}\tilde{\mathbf{G}}_{mk}^{H})\leq I_{m},\forall m\in\mathcal{M}\IEEEyessubnumber
\end{IEEEeqnarray}
where $\tilde{\mathbf{G}}_{mk}\triangleq\mathbf{G}_{m}\bar{\mathbf{V}}_{k}\in\mathbb{C}^{\tilde{n}_{m}\times\bar{n}_{k}}$.
  Note that the problem \eqref{eq:BD:Sumrate:PAPC:ZFremoved} is not a convex program and indeed different from \cite{Tran:MISO:13},  where only conventional MISO systems were considered, i.e.,  without interference constraints. Moreover, we will solve \eqref{eq:BD:Sumrate:PAPC:relaxed} through a BC-MAC duality, in contrast to dealing with the primal domain as done in \cite{Tran:MIMO:13}. Towards this end we consider the following  problem.
\begin{IEEEeqnarray}{rCl}\label{eq:BD:Sumrate:PAPC:relaxed}
\underset{\{\mathbf{S}_{k}\}\succeq\mathbf{0}}{\maxi}\quad & &\sum_{k=1}^{K}\log|\mathbf{I}+\tilde{\mathbf{H}}_{k}\mathbf{S}_{k}\tilde{\mathbf{H}}_{k}^{H}|\IEEEyesnumber\IEEEyessubnumber\label{eq:eq:BD:Sumrate:PAPC:relaxed:obj}\\
\st\quad & &\sum_{k=1}^{K}[\bar{\mathbf{V}}_{k}\mathbf{S}_{k}\bar{\mathbf{V}}_{k}^{H}]_{n,n}\leq P_{n},\:\forall n\in\mathcal{N}\IEEEyessubnumber\label{eq:BD:Sumrate:PAPC:relaxed_a}\\
 & &\sum_{k=1}^{K}\tr(\tilde{\mathbf{G}}_{mk}\mathbf{S}_{k}\tilde{\mathbf{G}}_{mk}^{H})\leq I_{m},\forall m\in\mathcal{M}\IEEEyessubnumber\label{eq:BD:Sumrate:PAPC:relaxed_b}\label{eq:BD:Sumrate:Inter}\\
& &\rank(\mathbf{S}_{k})\leq n_{k},\,\forall k\IEEEyessubnumber\label{eq:constraint:rank}
\end{IEEEeqnarray}
where $\tilde{\mathbf{H}}_{k}=\mathbf{H}_{k}\mathbf{\bar{V}}_{k}\in\mathbb{C}^{n_{k}\times\bar{n}_{k}}$
and $\mathbf{S}_{k}=\bar{\mathbf{T}}_{k}\bar{\mathbf{T}}_{k}^{H}$. 
Though \eqref{eq:BD:Sumrate:PAPC:relaxed} is a nonconvex program, it can be solved to global optimality by dropping the rank constraints in \eqref{eq:constraint:rank} and then considering the so-called rank relaxed problem. We will prove shortly that the optimal solutions of the relaxed problem must satisfy the rank constraints in \eqref{eq:constraint:rank}, meaning that the rank relaxation is tight. Thus, from now onwards, we will consider the relaxed problem of \eqref{eq:BD:Sumrate:PAPC:relaxed} in the sequel of the paper, instead of \eqref{eq:BD:Sumrate:PAPC:relaxed} in the original form.

To solve the relaxed problem of \eqref{eq:BD:Sumrate:PAPC:relaxed}, we can modify the method presented in \cite{Zhang:BDCooperative:2010} which is based on the subgradient method. However, subgradient methods generally converge very slow in practice. Herein we propose an efficient method to solve it using a barrier method which is known to have superlinear convergence property.

\section{Proposed Algorithm}\label{Proposedalgorithm}
In this section, we first transform the problem \eqref{eq:BD:Sumrate:PAPC:relaxed} into an equivalent convex-concave problem \cite[Section 10.3.4]{Stephen}, thereby showing the structure of the optimal precoder $\mathbf{S}_k$. We then propose a numerical algorithm to solve the convex-concave problem by customizing interior-point methods to find a saddle point (i.e., an optimal solution).
The development of the proposed algorithm is  particularly based on the following theorem.
\begin{thm}
\label{thm:BC-MAC-duality}Consider the following convex-concave problem.
\begin{equation}
\begin{array}{rl}
\underset{\boldsymbol{\psi}\geq0}{\mini}\;\underset{\{\mathbf{Q}_{k}\}\succeq\mathbf{0}}{\maxi} & \sum_{k=1}^{K}\log\frac{|\bar{\mathbf{V}}_{k}^{H}\boldsymbol{\Lambda}\bar{\mathbf{V}}_{k}+\tilde{\mathbf{H}}_{k}^{H}\mathbf{Q}_{k}\tilde{\mathbf{H}}_{k}|}{|\bar{\mathbf{V}}_{k}^{H}\boldsymbol{\Lambda}\bar{\mathbf{V}}_{k}|}\\
\st & \sum_{k=1}^{K}\tr(\mathbf{Q}_{k})\leq P\\
 & \mathbf{p}^{T}\boldsymbol{\psi}\leq P.
\end{array}\label{eq:BD:Sumrate:dualMAC}
\end{equation}
 where $\mathbf{\Lambda}=\diag(\boldsymbol{\eta})+\sum_{m=1}^M\lambda_m\mathbf{G}_{m}^H\mathbf{G}_{m}$, $\boldsymbol{\psi}=[\boldsymbol{\eta}^T\quad\boldsymbol{\lambda}^T]^T$, and $\mathbf{p}=[\mathbf{\bar{p}}^T\quad \boldsymbol{\bar{I}}^T]^T$. We note that the objective function in \eqref{eq:BD:Sumrate:dualMAC} is convex in $\boldsymbol{\psi}$ for fixed $\{\mathbf{Q}_{k}\}$, and concave in $\{\mathbf{Q}_{k}\}$ for fixed $\boldsymbol{\psi}$, hence the name convex-concave. Let $\boldsymbol{\Omega}_{k}=\bar{\mathbf{V}}_{k}^{H}\boldsymbol{\Lambda}\bar{\mathbf{V}}_{k}$, and $\mathbf{U}_{k}\mathbf{D}_{k}\mathbf{V}_{k}^{H}$ be a singular value decomposition (SVD) of $\tilde{\mathbf{H}}_{k}\boldsymbol{\Omega}_{k}^{-1/2}$
where $\mathbf{D}_{k}$ is square and diagonal. Then, the optimal solution $\mathbf{S}_{k}$ of \eqref{eq:BD:Sumrate:PAPC:relaxed} can be obtained from that of \eqref{eq:BD:Sumrate:dualMAC} as
\begin{equation}
\mathbf{S}_{k}=\boldsymbol{\Omega}_{k}^{-1/2}\mathbf{V}_{k}\mathbf{U}_{k}^{H}\mathbf{Q}_{k}\mathbf{U}_{k}\mathbf{V}_{k}^{H}\boldsymbol{\Omega}_{k}^{-1/2}\label{eq:BD:duality}.
\end{equation}
\end{thm}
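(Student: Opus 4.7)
My plan is to prove the theorem by applying Lagrangian duality to the rank-relaxed version of \eqref{eq:BD:Sumrate:PAPC:relaxed}, then converting the inner maximization into dual-MAC form through the SVD in the theorem statement, and finally verifying tightness of the rank relaxation. I expect the conceptual core to be the BC-to-MAC covariance transformation, while the main obstacle will be arguing that the two scalar constraints $\sum_k\tr(\mathbf{Q}_k)\le P$ and $\mathbf{p}^T\boldsymbol{\psi}\le P$ can be imposed with a common, but otherwise arbitrary, normalization constant $P$.

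First I would drop the rank constraints \eqref{eq:constraint:rank}, observe that the remaining problem is convex with a strictly feasible interior (e.g.\ a small scaled identity), so Slater's condition gives strong duality. Introducing multipliers $\boldsymbol{\eta}\ge\mathbf{0}$ for the PAPCs \eqref{eq:BD:Sumrate:PAPC:relaxed_a} and $\boldsymbol{\lambda}\ge\mathbf{0}$ for the interference constraints \eqref{eq:BD:Sumrate:PAPC:relaxed_b}, the per-antenna and per-PU penalties can be aggregated as $\boldsymbol{\Lambda}=\diag(\boldsymbol{\eta})+\sum_m\lambda_m\mathbf{G}_m^H\mathbf{G}_m$. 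Grouping terms user-wise yields a Lagrangian that decouples across $k$:
\begin{equation*}
\mathcal{L}=\sum_{k=1}^{K}\Bigl[\log|\mathbf{I}+\tilde{\mathbf{H}}_k\mathbf{S}_k\tilde{\mathbf{H}}_k^H|-\tr(\boldsymbol{\Omega}_k\mathbf{S}_k)\Bigr]+\mathbf{p}^T\boldsymbol{\psi},
\end{equation*}
with $\boldsymbol{\Omega}_k=\bar{\mathbf{V}}_k^H\boldsymbol{\Lambda}\bar{\mathbf{V}}_k$. Each user subproblem is a classical Gaussian capacity expression with colored ``effective noise'' $\boldsymbol{\Omega}_k$ and a linear trace penalty.

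Next I would apply a BC-MAC transformation to each decoupled term. Using the substitution $\mathbf{S}_k=\boldsymbol{\Omega}_k^{-1/2}\tilde{\mathbf{S}}_k\boldsymbol{\Omega}_k^{-1/2}$ and the SVD $\tilde{\mathbf{H}}_k\boldsymbol{\Omega}_k^{-1/2}=\mathbf{U}_k\mathbf{D}_k\mathbf{V}_k^H$ (with $\mathbf{V}_k$ a tall isometry and $\mathbf{U}_k$ unitary), the user-$k$ problem reduces to $\max_{\hat{\mathbf{S}}_k\succeq\mathbf{0}}\log|\mathbf{I}+\mathbf{D}_k\hat{\mathbf{S}}_k\mathbf{D}_k|-\tr(\hat{\mathbf{S}}_k)$. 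The same diagonalization applied on the MAC side to $\log\frac{|\boldsymbol{\Omega}_k+\tilde{\mathbf{H}}_k^H\mathbf{Q}_k\tilde{\mathbf{H}}_k|}{|\boldsymbol{\Omega}_k|}-\tr(\mathbf{Q}_k)$, via $\mathbf{Q}_k=\mathbf{U}_k\hat{\mathbf{Q}}_k\mathbf{U}_k^H$, gives the identical expression with $\hat{\mathbf{Q}}_k$ in place of $\hat{\mathbf{S}}_k$. Hence the two subproblems share the same optimal value and, undoing both substitutions, the primal covariance is recovered as $\mathbf{S}_k=\boldsymbol{\Omega}_k^{-1/2}\mathbf{V}_k\mathbf{U}_k^H\mathbf{Q}_k\mathbf{U}_k\mathbf{V}_k^H\boldsymbol{\Omega}_k^{-1/2}$, which is exactly \eqref{eq:BD:duality}. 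A short calculation using Sylvester's determinant identity together with $\mathbf{V}_k^H\mathbf{V}_k=\mathbf{I}$ and $\mathbf{U}_k^H\mathbf{U}_k=\mathbf{I}$ confirms $\log|\mathbf{I}+\tilde{\mathbf{H}}_k\mathbf{S}_k\tilde{\mathbf{H}}_k^H|=\log\frac{|\boldsymbol{\Omega}_k+\tilde{\mathbf{H}}_k^H\mathbf{Q}_k\tilde{\mathbf{H}}_k|}{|\boldsymbol{\Omega}_k|}$ and $\tr(\boldsymbol{\Omega}_k\mathbf{S}_k)=\tr(\mathbf{Q}_k)$ for every $\mathbf{Q}_k$, so the substitution works at the objective level, not just at the optimum.

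With the inner maximization rewritten in MAC form, the dual problem becomes $\min_{\boldsymbol{\psi}\ge\mathbf{0}}\max_{\mathbf{Q}_k\succeq\mathbf{0}}\sum_k\bigl[\log\frac{|\boldsymbol{\Omega}_k+\tilde{\mathbf{H}}_k^H\mathbf{Q}_k\tilde{\mathbf{H}}_k|}{|\boldsymbol{\Omega}_k|}-\tr(\mathbf{Q}_k)\bigr]+\mathbf{p}^T\boldsymbol{\psi}$. To obtain the clean form \eqref{eq:BD:Sumrate:dualMAC}, I would invoke a homogeneity argument: under the joint rescaling $(\boldsymbol{\psi},\mathbf{Q}_k)\mapsto t(\boldsymbol{\psi},\mathbf{Q}_k)$ the log-ratio is invariant (numerator and denominator both scale by $t^{\bar{n}_k}$), while the linear terms scale by $t$; stationarity then forces $\sum_k\tr(\mathbf{Q}_k)=\mathbf{p}^T\boldsymbol{\psi}$ at any saddle point, so the common value can be clamped to any positive $P$ by explicit constraints, yielding \eqref{eq:BD:Sumrate:dualMAC}. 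This normalization step is the delicate one, because one has to ensure that forcing both constraints with the same $P$ does not shrink the feasible set of saddle points.

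Lastly I would close the loop on the rank claim. Since $\mathbf{Q}_k\in\mathbb{C}^{n_k\times n_k}$, the expression \eqref{eq:BD:duality} produces $\mathbf{S}_k$ as a congruence of a matrix of rank at most $n_k$, hence $\rank(\mathbf{S}_k)\le n_k$. Therefore the rank relaxation is tight, the recovered $\mathbf{S}_k$ is feasible and optimal for \eqref{eq:BD:Sumrate:PAPC:relaxed}, and the BC-to-dual-MAC reformulation is complete.
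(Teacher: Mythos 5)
Your proposal is correct and follows the paper's overall route: form the partial Lagrangian of the rank-relaxed problem with multipliers $\boldsymbol{\eta},\boldsymbol{\lambda}$ aggregated into $\boldsymbol{\Lambda}$, whiten by $\boldsymbol{\Omega}_{k}^{1/2}$, map each decoupled user term to MAC form through the SVD of $\tilde{\mathbf{H}}_{k}\boldsymbol{\Omega}_{k}^{-1/2}$ (which yields exactly the recovery formula \eqref{eq:BD:duality}), normalize to obtain the constrained minimax \eqref{eq:BD:Sumrate:dualMAC}, and read off $\rank(\mathbf{S}_{k})\leq n_{k}$ from the size of $\mathbf{Q}_{k}$. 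Two points differ in execution. First, where the paper simply cites \cite[Appendix A]{Vishwanath} for the per-user BC--MAC value equality, you prove it directly; to make that step airtight you should record the projection argument, i.e.\ that for any $\tilde{\mathbf{S}}_{k}\succeq\mathbf{0}$ the log term depends only on $\mathbf{V}_{k}^{H}\tilde{\mathbf{S}}_{k}\mathbf{V}_{k}$ while $\tr(\mathbf{V}_{k}^{H}\tilde{\mathbf{S}}_{k}\mathbf{V}_{k})\leq\tr(\tilde{\mathbf{S}}_{k})$, so restricting to the range of $\mathbf{V}_{k}$ loses nothing. Second, and more substantively, your normalization step takes a different route from the paper's: the paper introduces an auxiliary scalar $\vartheta\geq0$, rewrites the inner maximization with the constraint $\sum_{k}\tr(\mathbf{Q}_{k})\leq\vartheta P$, rescales all variables by $\vartheta$, and then recognizes the result as the Lagrange dual (over $\vartheta$) of the problem with $\mathbf{p}^{T}\boldsymbol{\psi}\leq P$, i.e.\ a chain of pure problem equivalences (steps \eqref{eq:BD:Sumrate:dualMAC:proof:new}--\eqref{eq:BD:Sumrate:dualMAC:proof:dual}); you instead exploit degree-zero homogeneity of the log-ratio to deduce $\sum_{k}\tr(\mathbf{Q}_{k}^{\ast})=\mathbf{p}^{T}\boldsymbol{\psi}^{\ast}$ at any saddle point and then clamp this common value to $P$ by rescaling. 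Your argument can be completed (the Euler identity combined with the KKT conditions in $\mathbf{Q}_{k}$ and $\boldsymbol{\psi}$, including complementary slackness for the semidefinite and nonnegativity constraints, gives the equality, and one then verifies that the rescaled pair satisfies the KKT conditions of the constrained minimax while \eqref{eq:BD:duality} is invariant under the rescaling), but as written it is only sketched; the verification that the clamped point is genuinely a saddle point of \eqref{eq:BD:Sumrate:dualMAC}, and that $\sum_{k}\tr(\mathbf{Q}_{k}^{\ast})>0$ so the rescaling is well defined, should be spelled out, which is precisely the bookkeeping the paper's $\vartheta$-trick sidesteps. The trade-off is that your route gives a more direct saddle-point interpretation (and explains why both constraints are active at optimality), while the paper's route avoids any differentiability or attainment caveats.
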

\begin{proof}
Please refer to the Appendix.
\end{proof}
\begin{rem}
Theorem 1 is different but similar in many ways to Theorem 2 in \cite{Tran:MISO:13} which is only dedicated to PAPCs and MISO cases.    Our duality result in Theorem \ref{thm:BC-MAC-duality} can be viewed as an extension of the duality results in  \cite{Tran:MISO:13} to the case of multiple linear constraints and MIMO cases.  Since $\rank(\mathbf{Q}_{k})\leq n_{k}$, it follows that $\rank(\mathbf{S}_{k})\leq n_{k}$, which means the relaxed problem of  \eqref{eq:BD:Sumrate:PAPC:relaxed} is equivalent to  \eqref{eq:BD:Sumrate:PAPC:ZFremoved}. 
\end{rem}

\begin{rem}
		Before proceeding further we provide some insights into  the convex-concave program in \eqref{eq:BD:Sumrate:dualMAC}. Let $\tilde{f}(\boldsymbol{\psi},\{\mathbf{Q}_{k}\})\triangleq \sum_{k=1}^{K}\log\frac{|\bar{\mathbf{V}}_{k}^{H}\boldsymbol{\Lambda}\bar{\mathbf{V}}_{k}+\tilde{\mathbf{H}}_{k}^{H}\mathbf{Q}_{k}\tilde{\mathbf{H}}_{k}|}{|\bar{\mathbf{V}}_{k}^{H}\boldsymbol{\Lambda}\bar{\mathbf{V}}_{k}|} $, i.e., the objective of \eqref{eq:BD:Sumrate:dualMAC}. Then a problem  closely related to \eqref{eq:BD:Sumrate:dualMAC} is  given by 
	\begin{equation}
	\begin{array}{rl}
	\underset{\{\mathbf{Q}_{k}\}\succeq\mathbf{0}}{\maxi}\;\underset{\boldsymbol{\psi}\geq0}{\mini}& \tilde{f}(\boldsymbol{\psi},\{\mathbf{Q}_{k}\})\\
	\st & \sum_{k=1}^{K}\tr(\mathbf{Q}_{k})\leq P\\
	& \mathbf{p}^{T}\boldsymbol{\psi}\leq P.
	\end{array}\label{eq:BD:Sumrate:maximin}
	\end{equation}	
We can say that $(\boldsymbol{\psi}^{\ast},\{\mathbf{Q}_{k}^{\ast}\})$ is a solution to the convex-concave program or a saddle-point for the problem, if  for all $\boldsymbol{\psi}$ and $\{\mathbf{Q}_{k}\}$ 
\begin{equation}
\tilde{f}(\boldsymbol{\psi}^{\ast},\{\mathbf{Q}_{k}\})\leq \tilde{f}(\boldsymbol{\psi}^{\ast},\{\mathbf{Q}_{k}^{\ast}\})\leq\tilde{f}(\boldsymbol{\psi},\{\mathbf{Q}_{k}^{\ast}\}).
\end{equation}
Since $\tilde{f}(\boldsymbol{\psi},\{\mathbf{Q}_{k}\})$ is differentiable, the above inequality implies that the strong max-min property holds, i.e., 
\begin{equation}
\underset{\{\mathbf{Q}_{k}\}}{\maxi}\;\underset{\boldsymbol{\psi}}{\mini}\; \tilde{f}(\boldsymbol{\psi},\{\mathbf{Q}_{k}\}) = \underset{\boldsymbol{\psi}}{\mini}\;\underset{\{\mathbf{Q}_{k}\}}{\maxi} \;\tilde{f}(\boldsymbol{\psi},\{\mathbf{Q}_{k}\}).
\end{equation}
\end{rem}

It is clear from the above discussions that solving \eqref{eq:BD:Sumrate:dualMAC} boils down to finding a saddle point for the convex-concave problem  for which we will propose a computationally efficient algorithm. First, as intermediate results when proving Theorem 1 (see the steps from \eqref{eq:appendix:38} to \eqref{eq:BD:Sumrate:dualMAC:proof:dual} in the Appendix), we can set the inequality constraints to be equality ones without affecting the optimality. Based on this observation, it is more computationally efficient\footnote{The equality constraints are generally easier to handle when using a barrier method to solve an optimization problem. The reason is that we need to introduce a barrier function (i.e. the log function in our case) to deal with inequality constrains, while we do not need to do so for equality constraints.} to consider the following problem rather than  problem \eqref{eq:BD:Sumrate:dualMAC}.
\begin{equation}
\begin{array}{rl}
\underset{\boldsymbol{\psi}\geq0}{\mini}\;\underset{\{\mathbf{Q}_{k}\}\succeq\mathbf{0}}{\maxi} & \sum_{k=1}^{K}\log\frac{|\bar{\mathbf{V}}_{k}^{H}\boldsymbol{\Lambda}\bar{\mathbf{V}}_{k}+\tilde{\mathbf{H}}_{k}^{H}\mathbf{Q}_{k}\tilde{\mathbf{H}}_{k}|}{|\bar{\mathbf{V}}_{k}^{H}\boldsymbol{\Lambda}\bar{\mathbf{V}}_{k}|}\\
\st & \sum_{k=1}^{K}\tr(\mathbf{Q}_{k})=P\\
 & \mathbf{p}^{T}\boldsymbol{\psi}=P.
\end{array}\label{eq:BD:Sumrate:dualMAC:equi}
\end{equation}
 The proposed method is a result of applying a barrier method to find a saddle point for problem  \eqref{eq:BD:Sumrate:dualMAC:equi}. According to the barrier method, we consider the modified objective function given by
\begin{IEEEeqnarray}{rCl}\begin{IEEEeqnarraybox}[][c]{rCl}f(t,\boldsymbol{\psi},\{\mathbf{Q}_{k}\}) & = & \sum_{k=1}^{K}\log\frac{|\bar{\mathbf{V}}_{k}^{H}\boldsymbol{\Lambda}\bar{\mathbf{V}}_{k}+\tilde{\mathbf{H}}_{k}^{H}\mathbf{Q}_{k}\tilde{\mathbf{H}}_{k}|}{|\bar{\mathbf{V}}_{k}^{H}\boldsymbol{\Lambda}\bar{\mathbf{V}}_{k}|} \\  &  & \;-\frac{1}{t}\sum_{i=1}^{N+M}\log(\psi_{i})+\frac{1}{t}\sum_{k=1}^{K}\log|\mathbf{Q}_{k}|
\end{IEEEeqnarraybox}\label{eq:BD:Sumrate:dualMAC:modi}\IEEEeqnarraynumspace\end{IEEEeqnarray}
where $\log|\mathbf{Q}_{k}|$ and  $\log(\psi_{i})$ are the logarithmic barrier functions accounting for the positive semidefinite matrix constraint $\mathbf{Q}_k\succeq\mathbf{0}$ and  $\psi_{i}\geq 0$, respectively, and $t > 0$ is a parameter that controls the logarithmic barrier terms. 
Given a fixed value $t$, the barrier method requires solving the following equality constrained minimax problem
\begin{subequations}\begin{align}
\underset{\boldsymbol{\psi}\geq0}{\mini}\;\underset{\{\mathbf{Q}_{k}\}\succeq\mathbf{0}}{\maxi} & f(t,\boldsymbol{\psi},\{\mathbf{Q}_{k}\})\\
\st & \sum_{k=1}^{K}\tr(\mathbf{Q}_{k})=P\label{eq:BD:Sumrate:dualMAC:equa:b}\\
 & \mathbf{p}^{T}\boldsymbol{\psi}=P.\label{eq:BD:Sumrate:dualMAC:equa:c}
\end{align}\label{eq:BD:Sumrate:dualMAC:equa:t}\end{subequations}
 $\indent$The main idea of interior-point method is that, given a fixed value $t$, we find an optimal solution $(\{\mathbf{Q}_k\}, \boldsymbol{\psi})$ to \eqref{eq:BD:Sumrate:dualMAC:equa:t} which is referred to as the centering step, and then increase $t$ until the duality gap of the minimax optimization problem in \eqref{eq:BD:Sumrate:dualMAC:equa:t} satisfies an accuracy level.  We start with KKT conditions for \eqref{eq:BD:Sumrate:dualMAC:equa:t}, which are shown at the top of the next page, where  $\mathbf{v}_{k,i}=\bar{\mathbf{V}}_{k}^{H}\mathbf{e}_{i}$, i.e., $\mathbf{v}_{k,i}$ is the $i$th column of $\bar{\mathbf{V}}_{k}^{H}$, and $\mu_i, \; i=\{1, 2\}$, are the dual variables corresponding to the constraints in \eqref{eq:BD:Sumrate:dualMAC:equa:b} and \eqref{eq:BD:Sumrate:dualMAC:equa:c}, respectively.
 In \eqref{eq:stationary:Qk}, we have utilized the fact that the gradient of $\sum_{k=1}^{K}\log\frac{|\bar{\mathbf{V}}_{k}^{H}\boldsymbol{\Lambda}\bar{\mathbf{V}}_{k}+\tilde{\mathbf{H}}_{k}^{H}\mathbf{Q}_{k}\tilde{\mathbf{H}}_{k}|}{|\bar{\mathbf{V}}_{k}^{H}\boldsymbol{\Lambda}\bar{\mathbf{V}}_{k}|}$ with respect to $\mathbf{Q}_k$ is given by $\nabla_{\mathbf{Q}_k}\sum_{k=1}^{K}\log\frac{|\bar{\mathbf{V}}_{k}^{H}\boldsymbol{\Lambda}\bar{\mathbf{V}}_{k}+\tilde{\mathbf{H}}_{k}^{H}\mathbf{Q}_{k}\tilde{\mathbf{H}}_{k}|}{|\bar{\mathbf{V}}_{k}^{H}\boldsymbol{\Lambda}\bar{\mathbf{V}}_{k}|}=\tilde{\mathbf{H}}_{k}\bigl(\bar{\mathbf{V}}_{k}^{H}\boldsymbol{\Lambda}\bar{\mathbf{V}}_{k}+\tilde{\mathbf{H}}_{k}^{H}\mathbf{Q}_{k}\tilde{\mathbf{H}}_{k}\bigr)^{-1}\tilde{\mathbf{H}}_{k}^{H}$. Similarly, the gradient of $\sum_{k=1}^{K}\log\frac{|\bar{\mathbf{V}}_{k}^{H}\boldsymbol{\Lambda}\bar{\mathbf{V}}_{k}+\tilde{\mathbf{H}}_{k}^{H}\mathbf{Q}_{k}\tilde{\mathbf{H}}_{k}|}{|\bar{\mathbf{V}}_{k}^{H}\boldsymbol{\Lambda}\bar{\mathbf{V}}_{k}|}$ with respect to $\boldsymbol{\psi} (\boldsymbol{\eta}, \boldsymbol{\lambda})$, where $\boldsymbol{\Lambda}$ is a function of $\boldsymbol{\eta}$ and $ \boldsymbol{\lambda}$, is obtained as the first term in \eqref{eq:stationary:eta} and \eqref{eq:stationary:lambda}.

\begin{figure*}[t]\flushright{\begin{IEEEeqnarray}{rCl}\label{eq:KKT}
\tilde{\mathbf{H}}_{k}\bigl(\bar{\mathbf{V}}_{k}^{H}\boldsymbol{\Lambda}\bar{\mathbf{V}}_{k}+\tilde{\mathbf{H}}_{k}^{H}\mathbf{Q}_{k}\tilde{\mathbf{H}}_{k}\bigr)^{-1}\tilde{\mathbf{H}}_{k}^{H}+\frac{1}{t}\mathbf{Q}_{k}^{-1}-\mu_{1}\mathbf{I}&=&\mathbf{0},\forall k \IEEEyesnumber\IEEEyessubnumber \label{eq:stationary:Qk}\\
\sum_{k=1}^{K}\tr(\mathbf{Q}_{k})&=&P\IEEEyessubnumber \label{eq:feasible:Qk}\\
\sum_{k=1}^{K}\mathbf{v}_{k,i}^{H}\Bigl[(\bar{\mathbf{V}}_{k}^{H}\boldsymbol{\Lambda}\bar{\mathbf{V}}_{k}+\tilde{\mathbf{H}}_{k}^{H}\mathbf{Q}_{k}\tilde{\mathbf{H}}_{k})^{-1}-(\bar{\mathbf{V}}_{k}^{H}\boldsymbol{\Lambda}\bar{\mathbf{V}}_{k})^{-1}\Bigr]\mathbf{v}_{k,i}-\frac{1}{t}\eta_{i}^{-1}+\mu_{2}P_{i}&=&0,\forall i\IEEEyessubnumber\label{eq:stationary:eta}\\
\sum_{k=1}^{K}\tr\bigl(\tilde{\mathbf{G}}_{mk}\bigl[(\bar{\mathbf{V}}_{k}^{H}\boldsymbol{\Lambda}\bar{\mathbf{V}}_{k}+\tilde{\mathbf{H}}_{k}^{H}\mathbf{Q}_{k}\tilde{\mathbf{H}}_{k})^{-1}-(\bar{\mathbf{V}}_{k}^{H}\boldsymbol{\Lambda}\bar{\mathbf{V}}_{k})^{-1}\bigr]\tilde{\mathbf{G}}_{mk}^{H}\bigr)-\frac{\lambda_{m}^{-1}}{t}+\mu_{2}I_{m}&=&0,\forall m \IEEEeqnarraynumspace \IEEEyessubnumber\label{eq:stationary:lambda}\\
\mathbf{p}^{T}\boldsymbol{\psi}&=&P.\IEEEyessubnumber \label{eq:feasible:psi} 
\end{IEEEeqnarray}}
\hrulefill{}
\end{figure*}

To find a solution to the system of KKT conditions in \eqref{eq:KKT}, we use the infeasible start Newton method. More explicitly, the proposed algorithm starts with a point that does not satisfy the equalities. The key computation is to find a Newton step  in each iteration method. Towards this end we replace  $\mathbf{Q}_{k}$ by $\mathbf{Q}_{k}+\Delta\mathbf{Q}_{k}$, $\boldsymbol{\psi}$ by $\boldsymbol{\psi}+\Delta\boldsymbol{\psi}$, and $\mu_{i}$ by $\mu_{i}+\Delta\mu_{i}$
in \eqref{eq:stationary:Qk} to obtain
\begin{equation}\small\begin{aligned}&t\tilde{\mathbf{H}}_{k}\Bigl(\bar{\mathbf{V}}_{k}^{H}\boldsymbol{\Lambda}\bar{\mathbf{V}}_{k}+\tilde{\mathbf{H}}_{k}^{H}\mathbf{Q}_{k}\tilde{\mathbf{H}}_{k}+\bar{\mathbf{V}}_{k}^{H}\Delta\boldsymbol{\Lambda}\bar{\mathbf{V}}_{k}+ 
\tilde{\mathbf{H}}_{k}^{H}\Delta\mathbf{Q}_{k}\tilde{\mathbf{H}}_{k}\Bigr)^{-1}\tilde{\mathbf{H}}_{k}^{H}\\
& +(\mathbf{Q}_{k}+\Delta\mathbf{Q}_{k})^{-1}\negmedspace\negmedspace-t(\mu_{1}+\Delta\mu_{1})\mathbf{I}  =  \mathbf{0}\IEEEeqnarraynumspace\label{eq:statationary:Qk:step}\end{aligned}\end{equation}
for all $k$, where $\Delta\boldsymbol{\Lambda}=\diag(\boldsymbol{\Delta\eta})+\sum_{m=1}^M\Delta\lambda_m\mathbf{G}_{m}^H\mathbf{G}_{m}$ and $\Delta\boldsymbol{\psi}=[\Delta\boldsymbol{\eta}^T\quad\Delta\boldsymbol{\lambda}^T]^T$. By using the identity $(\boldsymbol{A}+\boldsymbol{B})^{-1}\approx\boldsymbol{A}^{-1}-\boldsymbol{A}^{-1}\boldsymbol{B}\boldsymbol{A}^{-1}$
for small $\boldsymbol{B}$,\footnote{The approximation is  precise for small $\boldsymbol{B}$ and relatively crude for large $\boldsymbol{B}$. In fact, in the first  iterations, the Newton steps  are large and the approximation is not very accurate.  However, when the algorithm approaches the optimal solution, the Newton steps will become  small and thus the approximation is very accurate. This will lead to a superlinear convergence rate of the proposed algorithm as demonstrated in Fig.~\ref{fig:Convergencebehavior:Iteration:N}.} and defining $\mathbf{\dot{H}}_{k}=\tilde{\mathbf{H}}_{k}\Bigl(\bar{\mathbf{V}}_{k}^{H}\boldsymbol{\Lambda}\bar{\mathbf{V}}_{k}+\tilde{\mathbf{H}}_{k}^{H}\mathbf{Q}_{k}\tilde{\mathbf{H}}_{k}\Bigr)^{-1}\tilde{\mathbf{H}}_{k}^{H}$ and
$\hat{\mathbf{H}}_{k}=\tilde{\mathbf{H}}_{k}\Bigl(\bar{\mathbf{V}}_{k}^{H}\boldsymbol{\Lambda}\bar{\mathbf{V}}_{k}+\tilde{\mathbf{H}}_{k}^{H}\mathbf{Q}_{k}\tilde{\mathbf{H}}_{k}\Bigr)^{-1}\bar{\mathbf{V}}_{k}^{H}$, we can approximate \eqref{eq:statationary:Qk:step} as
\begin{equation}\begin{aligned}
&t\mathbf{Q}_{k}\hat{\mathbf{H}}_{k}\Delta\boldsymbol{\Lambda}\hat{\mathbf{H}}_{k}^{H}\mathbf{Q}_{k}+t\mathbf{Q}_{k}\mathbf{\dot{H}}_{k}\Delta\mathbf{Q}_{k}\mathbf{\dot{H}}_{k}\mathbf{Q}_{k}\\
&+\Delta\mathbf{Q}_{k}+t\Delta\mu_{1}\mathbf{Q}_{k}^{2}=t\mathbf{Q}_{k}\mathbf{\dot{H}}_{k}\mathbf{Q}_{k}-t\mu_{1}\mathbf{Q}_{k}^{2}+\mathbf{Q}_{k}.\label{eq:stationary:Qk:approx}
\end{aligned}\end{equation}
Note that
\begin{equation}\begin{aligned}
\mathbf{Q}_{k}\hat{\mathbf{H}}_{k}\Delta\boldsymbol{\Lambda}\hat{\mathbf{H}}_{k}^{H}\mathbf{Q}_{k}&=\sum_{i=1}^{N}\Delta\eta_{i}\mathbf{Q}_{k}\hat{\mathbf{H}}_{k}\mathbf{e}_{i}\mathbf{e}_{i}^{H}\hat{\mathbf{H}}_{k}^{H}\mathbf{Q}_{k}\\
&\quad+\mathbf{Q}_{k}\hat{\mathbf{H}}_{k}\sum_{m=1}^{M}\Delta\lambda_m\mathbf{G}_{m}^H\mathbf{G}_{m}\hat{\mathbf{H}}_{k}^{H}\mathbf{Q}_{k}\\
&=\sum_{i=1}^{N}\Delta\eta_{i}\mathbf{u}_{k,i}\mathbf{u}_{k,i}^{H}\\
&\quad+\mathbf{Q}_{k}\hat{\mathbf{H}}_{k}\sum_{m=1}^{M}\Delta\lambda_m\mathbf{G}_{m}^H\mathbf{G}_{m}\hat{\mathbf{H}}_{k}^{H}\mathbf{Q}_{k}\nonumber
\end{aligned}\end{equation}
where $\mathbf{u}_{k,i}=\mathbf{Q}_{k}\hat{\mathbf{H}}_{k}\mathbf{e}_{i}$. Thus, \eqref{eq:stationary:Qk:approx} is equal to
\begin{equation}\begin{aligned}
&t\sum_{i=1}^{N}\Delta\eta_{i}\mathbf{u}_{k,i}\mathbf{u}_{k,i}^{H}+t\mathbf{Q}_{k}\hat{\mathbf{H}}_{k}\sum_{m=1}^{M}\Delta\lambda_m\mathbf{G}_{m}^H\mathbf{G}_{m}\hat{\mathbf{H}}_{k}^{H}\mathbf{Q}_{k}\\
&+t\mathbf{Q}_{k}\mathbf{\dot{H}}_{k}\Delta\mathbf{Q}_{k}\mathbf{\dot{H}}_{k}\mathbf{Q}_{k}+\Delta\mathbf{Q}_{k}+t\Delta\mu_{1}\mathbf{Q}_{k}^{2}\\
&=t\mathbf{Q}_{k}\mathbf{\dot{H}}_{k}\mathbf{Q}_{k}-t\mu_{1}\mathbf{Q}_{k}^{2}+\mathbf{Q}_{k}.\label{eq:stationary:Qk:approx:compress}
\end{aligned}\end{equation}
Next, from \eqref{eq:feasible:Qk} and \eqref{eq:feasible:psi},
we have
\begin{eqnarray}
\sum_{k=1}^{K}\tr(\Delta\mathbf{Q}_{k}) & = & P-\sum_{k=1}^{K}\tr(\mathbf{Q}_{k})\label{eq:feasible:Qk:step}\\
\mathbf{p}^{T}\Delta\boldsymbol{\psi} & = & P-\mathbf{p}^{T}\boldsymbol{\psi}.\label{eq:feasible:psi:step}
\end{eqnarray}
From \eqref{eq:stationary:eta} and \eqref{eq:stationary:lambda}, we have \eqref{eq:stationary:eta:gradient} and \eqref{eq:stationary:lambda:gradient}  shown at the top of the next page.
\begin{figure*}
\centering{\begin{IEEEeqnarray}{rCl}
t\sum_{k=1}^{K}\mathbf{v}_{k,i}^{H}\Bigl[(\bar{\mathbf{V}}_{k}^{H}\boldsymbol{\Lambda}\bar{\mathbf{V}}_{k}+\tilde{\mathbf{H}}_{k}^{H}\mathbf{Q}_{k}\tilde{\mathbf{H}}_{k}+\bar{\mathbf{V}}_{k}^{H}\Delta\boldsymbol{\Lambda}\bar{\mathbf{V}}_{k}+\tilde{\mathbf{H}}_{k}^{H}\Delta\mathbf{Q}_{k}\tilde{\mathbf{H}}_{k})^{-1}& &\label{eq:stationary:eta:gradient}\\
-(\bar{\mathbf{V}}_{k}^{H}\boldsymbol{\Lambda}\bar{\mathbf{V}}_{k}+\bar{\mathbf{V}}_{k}^{H}\Delta\boldsymbol{\Lambda}\bar{\mathbf{V}}_{k})^{-1}\Bigr]\mathbf{v}_{k,i}-(\eta_{i}+\Delta\eta_{i})^{-1}+t(\mu_{2}+\Delta\mu_{2})P_{i}&=& 0,\;\forall i.\nonumber\\
t\sum_{k=1}^{K}\tr\Bigl(\mathbf{\tilde{G}}_{mk}\Bigl[(\bar{\mathbf{V}}_{k}^{H}\boldsymbol{\Lambda}\bar{\mathbf{V}}_{k}+\tilde{\mathbf{H}}_{k}^{H}\mathbf{Q}_{k}\tilde{\mathbf{H}}_{k}+\bar{\mathbf{V}}_{k}^{H}\Delta\boldsymbol{\Lambda}\bar{\mathbf{V}}_{k}+\tilde{\mathbf{H}}_{k}^{H}\Delta\mathbf{Q}_{k}\tilde{\mathbf{H}}_{k})^{-1}\label{eq:stationary:lambda:gradient}\\
-(\bar{\mathbf{V}}_{k}^{H}\boldsymbol{\Lambda}\bar{\mathbf{V}}_{k}+\bar{\mathbf{V}}_{k}^{H}\Delta\boldsymbol{\Lambda}\bar{\mathbf{V}}_{k})^{-1}\Bigr]\mathbf{\tilde{G}}_{mk}^{H}\Bigr)
-(\lambda_{m}+\Delta\lambda_{m})^{-1}+t(\mu_{2}+\Delta\mu_{2})I_{m}  &=&  0,\;\forall m.\nonumber
\end{IEEEeqnarray}}\hrulefill{}
\end{figure*}

Letting $\boldsymbol{\Pi}_k=	\bigl(\bar{\mathbf{V}}_{k}^{H}\boldsymbol{\Lambda}\bar{\mathbf{V}}_{k}+\tilde{\mathbf{H}}_{k}^{H}\mathbf{Q}_{k}\tilde{\mathbf{H}}_{k}\bigr)$  and following the same steps from \eqref{eq:statationary:Qk:step} to \eqref{eq:stationary:Qk:approx:compress} we can approximate \eqref{eq:stationary:eta:gradient} as
\begin{IEEEeqnarray}{rCl}
&&t(\varphi_{i}-\sum_{j=1}^{N}\varphi_{i,j}\Delta\eta_{j}-\sum_{m=1}^{M}\varphi_{i,m}\Delta\lambda_{m}-\sum_{k=1}^{K}\boldsymbol{\beta}_{k,i}^{H}\Delta\mathbf{Q}_{k}\boldsymbol{\beta}_{k,i})\nonumber\\
&&+\eta_{i}^{-2}\Delta\eta_{i}+t\Delta\mu_{2}P_i=\eta_{i}^{-1}-t\mu_{2}P_i,\;\forall i\label{eq:stationary:psi:step:compress1}
\end{IEEEeqnarray}
where $\varphi_{i}$, $\varphi_{i,j}$, $\varphi_{i,m}$, and $\boldsymbol{\beta}_{k,i}$ are, respectively, given by
\begin{equation}\begin{aligned}
\varphi_{i}&=\sum_{k=1}^{K}\mathbf{v}_{k,i}^{H}\bigl[\boldsymbol{\Pi}_k^{-1}-\bigl(\bar{\mathbf{V}}_{k}^{H}\boldsymbol{\Lambda}\bar{\mathbf{V}}_{k}\bigr)^{-1}\bigr]\mathbf{v}_{k,i}\\
\varphi_{i,j}&=\sum_{k=1}^{K}|\mathbf{v}_{k,i}^{H}\boldsymbol{\Pi}_k^{-1}\bar{\mathbf{V}}_{k}^{H}\mathbf{e}_{j}|^{2}-|\mathbf{v}_{k,i}^{H}\bigl(\bar{\mathbf{V}}_{k}^{H}\boldsymbol{\Lambda}\bar{\mathbf{V}}_{k}\bigr)^{-1}\bar{\mathbf{V}}_{k}^{H}\mathbf{e}_{j}|^{2}\\
\varphi_{i,m}&=\sum_{k=1}^{K}\|\mathbf{v}_{k,i}^{H}\boldsymbol{\Pi}_k^{-1}\tilde{\mathbf{G}}_{mk}^H\|^{2}-\|\mathbf{v}_{k,i}^{H}\bigl(\bar{\mathbf{V}}_{k}^{H}\boldsymbol{\Lambda}\bar{\mathbf{V}}_{k}\bigr)^{-1}\tilde{\mathbf{G}}_{mk}^H\|^{2}\\
\boldsymbol{\beta}_{k,i}&=\tilde{\mathbf{H}}_{k}\boldsymbol{\Pi}_k^{-1}\mathbf{v}_{k,i}.
\end{aligned}\end{equation}
In the same way,  \eqref{eq:stationary:lambda:gradient} is approximated as
\begin{equation}\begin{aligned}
&t\Bigl(\phi_{m}-\negmedspace\negmedspace \sum_{j=1}^{N}\negmedspace\phi_{m,j}\Delta\eta_{j}-\negmedspace\negmedspace\sum_{s=1}^{M}\negmedspace\phi_{m,s}\Delta\lambda_{s}-\negmedspace\negmedspace\sum_{k=1}^{K}\negmedspace\tr\bigl(\boldsymbol{\Xi}_{m,k}^{H}\Delta\mathbf{Q}_{k}\boldsymbol{\Xi}_{m,k}\bigr)\Bigr)\\
&+\lambda_{m}^{-2}\Delta\lambda_{m}+t\Delta\mu_{2}I_m=\lambda_{m}^{-1}-t\mu_{2}I_m,\;\forall m
\end{aligned}\label{eq:stationary:psi:step:compress1_a}
\end{equation}
where $\phi_{m}$, $\phi_{m,j}$, $\phi_{m,s}$, and $\boldsymbol{\Xi}_{m,k}$ are, respectively, defined as
\begin{equation}\begin{aligned}
\phi_{m}&=\sum_{k=1}^{K}\tr\Bigl(\tilde{\mathbf{G}}_{mk}\bigl[\boldsymbol{\Pi}_k^{-1}-\bigl(\bar{\mathbf{V}}_{k}^{H}\boldsymbol{\Lambda}\bar{\mathbf{V}}_{k}\bigr)^{-1}\bigr]\tilde{\mathbf{G}}_{mk}^H\Bigr)\\
\phi_{m,j}&=\sum_{k=1}^{K}\tr\Bigl(\tilde{\mathbf{G}}_{mk}\Bigl[\boldsymbol{\Pi}_k^{-1}\bar{\mathbf{V}}_{k}^{H}\mathbf{e}_{j}\mathbf{e}_{j}^H\bar{\mathbf{V}}_{k}\boldsymbol{\Pi}_k^{-1}\\
&\quad-\bigl(\bar{\mathbf{V}}_{k}^{H}\boldsymbol{\Lambda}\bar{\mathbf{V}}_{k}\bigr)^{-1}\bar{\mathbf{V}}_{k}^{H}\mathbf{e}_{j}\mathbf{e}_{j}^H\mathbf{V}_{k}\bigl(\bar{\mathbf{V}}_{k}^{H}\boldsymbol{\Lambda}\bar{\mathbf{V}}_{k}\bigr)^{-1}\Bigr]\tilde{\mathbf{G}}_{mk}^H\Bigr)\\
\phi_{m,s}&=\sum_{k=1}^{K}\tr\Bigl(\tilde{\mathbf{G}}_{mk}\Bigl[\boldsymbol{\Pi}_k^{-1}\tilde{\mathbf{G}}_{sk}^H\tilde{\mathbf{G}}_{sk}\boldsymbol{\Pi}_k^{-1}\\
&\quad-\linebreak\bigl(\bar{\mathbf{V}}_{k}^{H}\boldsymbol{\Lambda}\bar{\mathbf{V}}_{k}\bigr)^{-1}\tilde{\mathbf{G}}_{sk}^H\tilde{\mathbf{G}}_{sk}\bigl(\bar{\mathbf{V}}_{k}^{H}\boldsymbol{\Lambda}\bar{\mathbf{V}}_{k}\bigr)^{-1}\Bigr]\tilde{\mathbf{G}}_{mk}^H\Bigr)\\
\boldsymbol{\Xi}_{m,k}&=\tilde{\mathbf{H}}_{k}\boldsymbol{\Pi}_k^{-1}\tilde{\mathbf{G}}_{mk}^H.
\end{aligned}\end{equation}

We can find the Newton steps for the optimization variables by staking \eqref{eq:stationary:Qk:approx:compress}, \eqref{eq:feasible:Qk:step}, \eqref{eq:feasible:psi:step}, \eqref{eq:stationary:psi:step:compress1}, and \eqref{eq:stationary:psi:step:compress1_a} into a system of linear equations.  However such a conventional method requires complexity of $\mathcal{O}\bigl(K^3N^6\bigr)$\footnote{For a complex Hermitian matrix, the number of variables is $N^2$ where $N$ is the size of the matrix. Suppose we have $K$ Hermitian matrices, then the total number of variables is $K\times N^2$. Thus solving the system of linear equations has  complexity of $\mathcal{O}(K^3N^6)$.} which is relatively high. In this paper we instead follow a block elimination method \cite{Tran:MIMO:13,Stephen} which results in much lower complexity.
\begin{equation}
\Delta\mathbf{Q}_{k}=\boldsymbol{\Sigma}_{k}^{(0)}+\sum_{i=1}^{N+M}\Delta\psi_{i}\boldsymbol{\Sigma}_{k}^{(i)}+\Delta\mu_{1}\boldsymbol{\Sigma}_{k}^{(N+M+1)}.\label{eq:block:elimination}
\end{equation}
For notational simplicity, in \eqref{eq:discrete:Sylvester}, \eqref{eq:stationary:psi:step:compress2a}, and \eqref{eq:stationary:psi:step:compress2b}, we explicitly write \eqref{eq:block:elimination} as $\Delta\mathbf{Q}_{k}=\boldsymbol{\Sigma}_{k}^{(0)}+\sum_{i=1}^{N}\Delta\eta_{i}\boldsymbol{\Sigma}_{k}^{(i)}+\sum_{m=1}^{M}\Delta\lambda_{m}\boldsymbol{\Sigma}_{k}^{(m)}+\Delta\mu_{1}\boldsymbol{\Sigma}_{k}^{(N+M+1)}$. Substituting \eqref{eq:block:elimination} into \eqref{eq:stationary:Qk:approx:compress} yields a system of $(N+M+2)$ discrete-time Sylvester equations as follows:
\begin{equation}\begin{aligned}
&t\mathbf{Q}_{k}\mathbf{\dot{H}}_{k}\boldsymbol{\Sigma}_{k}^{(0)}\mathbf{\dot{H}}_{k}\mathbf{Q}_{k}+\boldsymbol{\Sigma}_{k}^{(0)}=t\mathbf{Q}_{k}\mathbf{\dot{H}}_{k}\mathbf{Q}_{k}+\mathbf{Q}_{k}-t\mu_{1}\mathbf{Q}_{k}^{2}\\
&t\mathbf{Q}_{k}\mathbf{\dot{H}}_{k}\boldsymbol{\Sigma}_{k}^{(i)}\mathbf{\dot{H}}_{k}\mathbf{Q}_{k}+\boldsymbol{\Sigma}_{k}^{(i)}=-t\mathbf{u}_{k,i}\mathbf{u}_{k,i}^{H},\;\text{for}\; i=1,\cdots, N\\
&t\mathbf{Q}_{k}\mathbf{\dot{H}}_{k}\boldsymbol{\Sigma}_{k}^{(m)}\mathbf{\dot{H}}_{k}\mathbf{Q}_{k}+\boldsymbol{\Sigma}_{k}^{(m)}=-t\mathbf{Q}_{k}\hat{\mathbf{H}}_{k}\mathbf{G}_{m}^H\mathbf{G}_{m}\hat{\mathbf{H}}_{k}^{H}\mathbf{Q}_{k}, \forall m\\
&t\mathbf{Q}_{k}\mathbf{\dot{H}}_{k}\boldsymbol{\Sigma}_{k}^{(N+M+1)}\mathbf{\dot{H}}_{k}\mathbf{Q}_{k}+\boldsymbol{\Sigma}_{k}^{(N+M+1)}  =  -t\mathbf{Q}_{k}^{2}.\label{eq:discrete:Sylvester}
\end{aligned}\end{equation}
Numerical methods to solve the discrete-time Sylvester equations in \eqref{eq:discrete:Sylvester} with complexity $\mathcal{O}(n_{k}^{3})$ can be found, e.g., in \cite{Higham}. That is to say, the complexity of solving \eqref{eq:discrete:Sylvester} is much less than that of solving a system of linear equations. Then, substituting \eqref{eq:block:elimination} into \eqref{eq:stationary:psi:step:compress1}
results in
\begin{IEEEeqnarray}{rCl}
&&t\Bigl(\varphi_{i}-\sum_{j=1}^{N}\varphi_{i,j}\Delta\eta_{j}-\sum_{m=1}^{M}\varphi_{i,m}\Delta\lambda_{m}-\sum_{k=1}^{K}\boldsymbol{\beta}_{k,i}^{H}\bigl(\boldsymbol{\Sigma}_{k}^{(0)}\nonumber\\
&&+\sum_{j=1}^{N}\Delta\eta_{j}\boldsymbol{\Sigma}_{k}^{(j)}+\sum_{m=1}^{M}\Delta\lambda_{m}\boldsymbol{\Sigma}_{k}^{(m)}
+\Delta\mu_{1}\boldsymbol{\Sigma}_{k}^{(N+M+1)}\bigr)\boldsymbol{\beta}_{k,i}\Bigr)\nonumber\\
&&+\eta_{i}^{-2}\Delta\eta_{i}+tP_{i}\Delta\mu_{2}=\eta_{i}^{-1}-tP_{i}\mu_{2}.\label{eq:stationary:psi:step:compress2a}
\end{IEEEeqnarray}
Let $\omega_{i}=\sum_{k=1}^{K}\boldsymbol{\beta}_{k,i}^{H}\boldsymbol{\Sigma}_{k}^{(N+M+1)}\boldsymbol{\beta}_{k,i}$,
$\gamma_{i,j}=\sum_{k=1}^{K}\boldsymbol{\beta}_{k,i}^{H}\boldsymbol{\Sigma}_{k}^{(j)}\boldsymbol{\beta}_{k,i}$, and $\gamma_{i,m}=\sum_{k=1}^{K}\boldsymbol{\beta}_{k,i}^{H}\boldsymbol{\Sigma}_{k}^{(m)}\boldsymbol{\beta}_{k,i}$.
Then, we can rewrite \eqref{eq:stationary:psi:step:compress2a} as
\begin{IEEEeqnarray}{rCl}
&t\sum_{j=1}^{N}\tilde{\varphi}_{i,j}\Delta\eta_{j}+t\sum_{m=1}^{M}\tilde{\varphi}_{i,m}\Delta\lambda_{m}
-\eta_{i}^{-2}\Delta\eta_{i}+t\omega_{i}\Delta\mu_{1}\nonumber\\
&-tP_{i}\Delta\mu_{2}=t\tilde{\varphi}_{i}+tP_{i}\mu_{2}-\eta_{i}^{-1},\,\forall i=1,\,2,\ldots,\, N\label{eq:stationary:psi:step:compress3}
\end{IEEEeqnarray}
where $\tilde{\varphi}_{i,j}=\varphi_{i,j}+\gamma_{i,j}$, $\tilde{\varphi}_{i,m}=\varphi_{i,m}+\gamma_{i,m}$, and
$\tilde{\varphi}_{i}=\varphi_{i}-\sum_{k=1}^{K}\boldsymbol{\beta}_{k,i}^{H}\boldsymbol{\Sigma}_{k}^{(0)}\boldsymbol{\beta}_{k,i}$.
Similarly to the steps from \eqref{eq:stationary:psi:step:compress2a} and \eqref{eq:stationary:psi:step:compress3}, let $\tilde{\omega}_{m}=\sum_{k=1}^{K}\tr\bigl(\boldsymbol{\Xi}_{m,k}^{H}\boldsymbol{\Sigma}_{k}^{(N+M+1)}\boldsymbol{\Xi}_{m,k}\bigr)$,\linebreak
$\tilde{\gamma}_{m,j}=\sum_{k=1}^{K}\tr\bigl(\boldsymbol{\Xi}_{m,k}^{H}\boldsymbol{\Sigma}_{k}^{(j)}\boldsymbol{\Xi}_{m,k}\bigr)$, and $\tilde{\gamma}_{m,s}=\sum_{k=1}^{K}\tr\bigl(\boldsymbol{\Xi}_{m,k}^{H}\boldsymbol{\Sigma}_{k}^{(s)}\boldsymbol{\Xi}_{m,k}\bigr)$, \eqref{eq:stationary:psi:step:compress1_a} can then be  rewritten as
\begin{IEEEeqnarray}{rCl}
&&t\sum_{j=1}^{N}\tilde{\phi}_{m,j}\Delta\eta_{j}+t\sum_{s=1}^{M}\tilde{\phi}_{m,s}\Delta\lambda_{s}
-\lambda_{m}^{-2}\Delta\lambda_{m}+\tilde{\omega}_{m}\Delta\mu_{1}\nonumber\\
&&-tI_{m}\Delta\mu_{2}=t\tilde{\phi}_{m}+tI_{m}\mu_{2}-\lambda_{m}^{-1},\,\forall m=1,\,2,\ldots,\, M\label{eq:stationary:psi:step:compress2b}\;\;\;\;
\end{IEEEeqnarray}
where $\tilde{\phi}_{m,j}=\phi_{m,j}+\tilde{\gamma}_{m,j}$, $\tilde{\phi}_{m,s}=\phi_{m,s}+\tilde{\gamma}_{m,s}$, and $\tilde{\phi}_{m}=\phi_{m}-\sum_{k=1}^{K}\tr\bigl(\boldsymbol{\Xi}_{m,k}^{H}\boldsymbol{\Sigma}_{k}^{(0)}\boldsymbol{\Xi}_{m,k}\bigr)$.
Next, substituting \eqref{eq:block:elimination} into \eqref{eq:feasible:Qk:step} yields
\begin{IEEEeqnarray}{rCl}
\sum_{k=1}^{K}\tr\Bigl(\boldsymbol{\Sigma}_{k}^{(0)}+\sum_{i=1}^{N+M}\Delta\psi_{i}\boldsymbol{\Sigma}_{k}^{(i)}+&&\Delta\mu_{1}\boldsymbol{\Sigma}_{k}^{(N+M+1)}\Bigr)\nonumber\\
&&=P-\sum_{k=1}^{K}\tr(\mathbf{Q}_{k})
\end{IEEEeqnarray}
or equivalently
\begin{equation}
\sum_{i=1}^{N+M}\chi_{i}\Delta\psi_{i}+\chi_{N+M+1}\Delta\mu_{1}=P-\sum_{k=1}^{K}\tr(\mathbf{Q}_{k}+\boldsymbol{\Sigma}_{k}^{(0)}).\label{eq:feasible:Qk:step:transformed}
\end{equation}
where $\chi_{N+M+1}=\sum_{k=1}^{K}\tr(\boldsymbol{\Sigma}_{k}^{(N+M+1)})$ and $\chi_{i}=\sum_{k=1}^{K}\tr(\boldsymbol{\Sigma}_{k}^{(i)})$.
Let us define $\Delta\mathbf{x}=[\Delta\boldsymbol{\psi}^{T}\;\Delta\mu_{1}\;\Delta\mu_{2}]^{T}$. Then, we can stack  \eqref{eq:feasible:psi:step},  \eqref{eq:stationary:psi:step:compress3}, \eqref{eq:stationary:psi:step:compress2b}, and \eqref{eq:feasible:Qk:step:transformed} into a system of linear equations as
\begin{equation}
\mathbf{A}\Delta\mathbf{x}=\mathbf{b}\label{eq:dual:step}
\end{equation}
where $b_{i}=t\tilde{\varphi}_{i}+tP_{i}\mu_{2}-\psi_{i}^{-1}$ for
$i=1,\,2,\,\ldots,\, N$,  $b_{i}=t\tilde{\phi}_{m}+tI_{m}\mu_{2}-\psi_{i}^{-1}$ for
$i=N+1,\,N+2,\,\ldots,\, N+M$ corresponding to $m=1, 2, \ldots, M$, $b_{N+M+1}=P-\sum_{k=1}^{K}\tr(\mathbf{Q}_{k}+\boldsymbol{\Sigma}_{k}^{(0)})$,
and $b_{N+M+2}=P-\mathbf{p}^{T}\boldsymbol{\psi}$. In summary,  the entries of $\mathbf{A}\in\mathbb{C}^{(N+M+2)\times(N+M+2)}$
are given by
\begin{equation}\small
A_{i,j}=\\ \begin{cases}
t\tilde{\varphi}_{i,j}-\frac{\delta_{i,j}}{\psi_{i}^{2}}\negmedspace & 1\leq i,j\leq N\\
t\tilde{\varphi}_{i,m} \negmedspace& 1\leq i\leq N, N+1\leq j\leq N+M\\
t\tilde{\phi}_{m,j} \negmedspace& N+1\leq i\leq N+M, 1\leq j\leq N\\
t\tilde{\phi}_{m,s}-\frac{\delta_{i,j}}{\psi_{i}^{2}} \negmedspace& N+1\leq i, j\leq N+M\\
t\omega_{i} \negmedspace& 1\leq i\leq N,j=N+M+1\\
t\tilde{\omega}_{m} \negmedspace& N+1\leq i\leq N+M,j=N+M+1\\
-tP_{i} \negmedspace& 1\leq i\leq N+M,j=N+M+2\\
\chi_{j} \negmedspace& i=N+M+1,1\leq j\leq N+M+1\\
P_{j} \negmedspace& i=N+M+2,1\leq j\leq N+M\\
0 \negmedspace& \textrm{otherwise}\nonumber
\end{cases}
\end{equation}
where $\delta_{i,j}$ denotes the Kronecker's function, i.e., $\delta_{i,j}=1$
if $i=j$ and $\delta_{i,j}=0$ otherwise, and $\mathbf{p}=[\mathbf{\bar{p}}^T\quad \boldsymbol{\bar{I}}^T]^T=[P_1,\cdots, P_j,\cdots, P_{N+M}]^T$.
 Herein, the complexity of computing the inverse of the KKT matrix in \eqref{eq:dual:step} is of the order $\mathcal{O}\bigl((N+M)^3\bigr)$.

We summarize the proposed algorithm based on barrier method to solve \eqref{eq:BD:Sumrate:dualMAC} in Algorithm 1. In line 7 of Algorithm 1, $r(\{\mathbf{Q}_{k}\},\boldsymbol{\psi},\{\mu_{i}\})$ denotes the residual norm of $\{\mathbf{Q}_k\}$, $\boldsymbol{\psi}$, and $\{\mu_i\}$, which is used in the backtracking line search procedure and is defined as \cite{Tran:MIMO:13}
\begin{equation}\begin{aligned}
&r\bigl(\{\mathbf{Q}_k\}, \boldsymbol{\psi}, \{\mu_i\}\bigr)=\sum_{k=1}^{K}\|\dot{\mathbf{H}}_{k}+\frac{1}{t}\mathbf{Q}_{k}^{-1}-\mu_{1}\mathbf{I}\|_{F}\\
 &\quad+\|\boldsymbol{u}\|_{2}+\|\boldsymbol{w}\|_{2}+|P-\sum_{k=1}^{K}\tr(\mathbf{Q}_{k})|+|P-\mathbf{p}^{T}\boldsymbol{\psi}|
\end{aligned}\end{equation}
where $u_i=\sum_{k=1}^{K}\mathbf{v}_{k,i}^{H}\Bigl[\boldsymbol{\Pi}_k^{-1}-(\bar{\mathbf{V}}_{k}^{H}\boldsymbol{\Lambda}\bar{\mathbf{V}}_{k})^{-1}\Bigr]\mathbf{v}_{k,i}-\frac{1}{t}\eta_{i}^{-1}+\mu_{2}P_{i}$ for $i=1,\, 2,\, \cdots, N$, and
$w_m=\sum_{k=1}^{K}\tr\Bigl(\tilde{\mathbf{G}}_{mk}\Bigl[\boldsymbol{\Pi}_k^{-1}-(\bar{\mathbf{V}}_{k}^{H}\boldsymbol{\Lambda}\bar{\mathbf{V}}_{k})^{-1}\Bigr]\tilde{\mathbf{G}}_{mk}^{H}\Bigr)-\frac{1}{t}\lambda_{m}^{-1}+\mu_{2}I_{m}$ for $m=1,\, 2,\, \cdots, M$. The backtracking line search stops when the residual norm is smaller than a  predetermined threshold, i.e., $\epsilon$ as shown in line 11.

\begin{algorithm}[t]
\begin{algorithmic}[1]

\protect\caption{The proposed numerical algorithm to solve \eqref{eq:BD:Sumrate:dualMAC}}

\label{algo:proposed:BD:PAPC}

\global\long\def\algorithmicrequire{\textbf{Initinalization:}}

\REQUIRE $\mathbf{Q}_{k}:=\mathbf{I}_{n_{k}}$, $\boldsymbol{\psi}:=\boldsymbol{1}$,
$\mu_{1}=\mu_{2}=1$, $t:=t_{0}$,  $\gamma$, and tolerance $\epsilon>0$

\REPEAT[Outer iteration]

 \REPEAT[inner iteration (centering step)]

\STATE Solve \eqref{eq:discrete:Sylvester} to find $\boldsymbol{\Sigma}_{k}^{(i)}$
for $1\leq k\leq K$ and $0\leq i\leq N+M+1$.

\STATE Solve \eqref{eq:dual:step} to find $\Delta\boldsymbol{\psi}$
and $\Delta\mu_{i}$.

\renewcommand{\algorithmicbody}{\text{Backtracking line search:}}

\BODY

\STATE $s=1$

\WHILE{ $r(\{\mathbf{Q}_{k}\}+s\{\Delta\mathbf{Q}_{k}\},\boldsymbol{\psi}+s\Delta\boldsymbol{\psi}, \{\mu_{i}\}+s\{\Delta\mu_{i}\})>(1-\alpha s)r(\{\mathbf{Q}_{k}\},\boldsymbol{\psi},\{\mu_{i}\})$ $\text{or}$ $\{\mathbf{Q}_{k}\}+s\{\Delta\mathbf{Q}_{k}\}\preceq\mathbf{0}$}

\STATE $s=\beta s$

\ENDWHILE \ENDBODY

\STATE Update primal and dual variables: $\mathbf{Q}_{k}:=\mathbf{Q}_{k}+s\Delta\mathbf{Q}_{k}$;
$\boldsymbol{\psi}:=\boldsymbol{\psi}+s\Delta\boldsymbol{\psi}$, $\mu_i:=\mu_i+s\Delta\mu_i$

\UNTIL $r(\{\mathbf{Q}_{k}\},\boldsymbol{\psi}, \mu_{i})<\epsilon$

\STATE Increase $t$: $t=\gamma t$.

\UNTIL  $t$ is sufficiently large to tolerate the duality gap.

\end{algorithmic}
\end{algorithm}

\subsection*{Convergence and Complexity Analysis}
Algorithm \ref{algo:proposed:BD:PAPC} which is based on a barrier method is guaranteed to converge to a solution to the convex-concave problem in \eqref{eq:BD:Sumrate:dualMAC}, following the same arguments as those in \cite[Sec.~11.3.3]{Stephen}. Moreover, the numerical results provided in Fig. \ref{fig:Convergencebehavior:Iteration:N} demonstrate that Algorithm 1 exhibits a superlinear convergence rate, i.e., it converges very fast when approaching the optimal solution. The complexity of Algorithm \ref{algo:proposed:BD:PAPC} is mostly due to solving \eqref{eq:discrete:Sylvester} and \eqref{eq:dual:step}, which require  complexity  $\mathcal{O}(n_{k}^{3})$ and $\mathcal{O}\bigl((N+M)^3\bigr)$, respectively. That is to say, the proposed algorithm requires significantly lower complexity, compared to a generic method that has complexity of $\mathcal{O}(K^3N^6)$ as mentioned previously. We note that semidefinite programing (SDP) can be applied to solve the relaxed problem of \eqref{eq:BD:Sumrate:PAPC:relaxed} since the logdet function can be represented by semidefinite cone \cite[page~149]{R1}. Modern SDP solvers are usually based on a specific interior point method which is called the primal-dual path following method. However, such a method, e.g., the one in \cite{R2}, will have a per iteration complexity of $\mathcal{O}(K^4\bar{n}_k^4)$ which is higher than that in our proposed algorithm, especially when $K$ is large.

\section{Numerical Results}\label{Numericalresults}

In this section, we provide numerical examples to illustrate the results of the proposed algorithm. The entry of the channel matrix  from the secondary BS to the $k$th SU is modeled as correlated Rayleigh fading, i.e., $\mathbf{H}_k=\mathbf{P}_{k}^{1/2}\ddot{\mathbf{H}}_k\mathbf{R}_{k}^{1/2}$ where  $\ddot{\mathbf{H}}_k$ is a matrix of independent circularly symmetric complex Gaussian (CSCG) random variables  with zero mean and unit variance, and $\mathbf{P}_{k}\in\mathbb{C}^{n_k\times n_k}$ and $\mathbf{R}_{k}\in\mathbb{C}^{N\times N}$ are the receive and transmit correlation matrices, respectively.  In our simulation setups, the exponential correlation model is used, where $\mathbf{P}_{k}$ and $\mathbf{R}_{k}$ are generated as $[\mathbf{P}_{k}]_{i,j}=r_{k}^{|i-j|}$
and $[\mathbf{R}_{k}]_{i,j}=\tilde{r}_{k}^{|i-j|}$, respectively \cite{Poyka,Tran:TSP:12}. The correlation coefficients $r_{k}$ and $\tilde{r}_{k}$
are given by $r_{k}=r\times e^{j\breve{\phi}_{k}}$ and $\tilde{r}_{k}=r\times e^{j\hat{\phi}_{k}}$, where $r\in[0,1]$ and $\breve{\phi}_{k}$ and $\hat{\phi}_{k}$ are i.i.d. and uniformly distributed over $[0, 2\pi)$. The channel from the secondary BS to the $m$th PU is  generated  as $\mathbf{G}_m=\mathbf{P}_{m}^{1/2}\ddot{\mathbf{G}}_m\mathbf{R}_{m}^{1/2}$ and the covariance matrices $\mathbf{P}_m$ and $\mathbf{R}_m$ are generated similarly as described above. The number of antennas at each SU receiver is set to $n_k = 2$ for all $k$. For simplicity, we further assume the interference thresholds for all PUs are equal, i.e., $I_m = I$, for all $m$,  and the resultant power constraint for each antenna  is $P_n = P/N$ for all $n$, where $P$ is the total power budget. The  SRs  are averaged over 10,000 simulation trials. For the purpose of exposition,   the results in  Figs. \ref{fig:sumrate:transmitpower}-\ref{fig:sumrate:transmitpower:PUsystem} are shown with $r=0$, i.e.,  uncorrelated Rayleigh fading.

\begin{figure}
    \begin{center}
    \begin{subfigure}[Convergence results of Algorithm 1 for different numbers of transmit antennas at the secondary BS. The number of PUs is $M=1$.]{
        \includegraphics[width=0.4\textwidth]{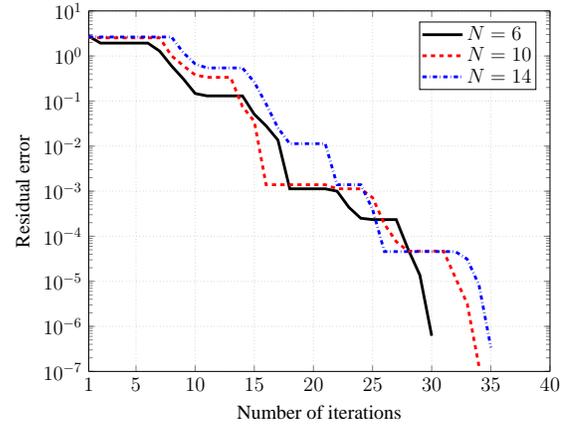}}
    		\label{fig:a}
				\end{subfigure}
		\hfill \\
		 \begin{subfigure}[Convergence results of Algorithm 1 for different numbers of PUs. The number of transmit antennas is $N=10.$]{
        \includegraphics[width=0.4\textwidth]{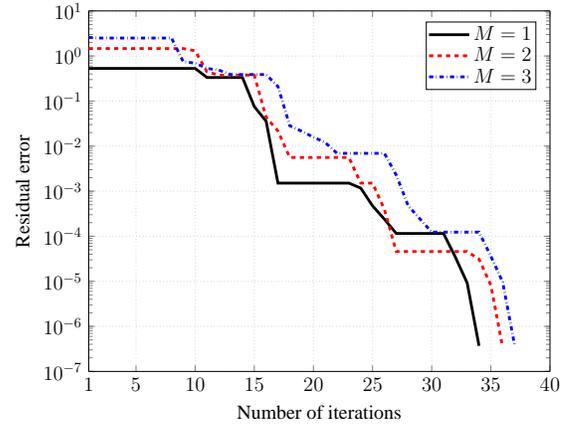}}
        \label{fig:b}
    \end{subfigure}
	  \caption{Convergence results of the proposed algorithm (a) for  different numbers of transmit antennas at the secondary BS, and (b) for different numbers of PUs. Each curve is obtained for one channel realization. The parameters of Algorithm 1 are  as follows. The tolerance  is set to  $\epsilon = 10^{-5}$. The barrier parameters $\gamma$ and $t_0$ are  set to 1 and 50, respectively. The backtracking line search parameters in Algorithm 1 are set to  $\alpha = 0.01$ and $\beta = 0.5$. In this example, we set the network parameters as $K = 2, n_k = 2, \forall k,	\tilde{n}_m = 2, \forall m, P = 10\, \mbox{dB},$ and $ I = 5\, \mbox{dB}$.}\label{fig:Convergencebehavior:Iteration:N}
\end{center}
\end{figure}

\begin{figure}[t]
\centering
\includegraphics[width=0.45\textwidth]{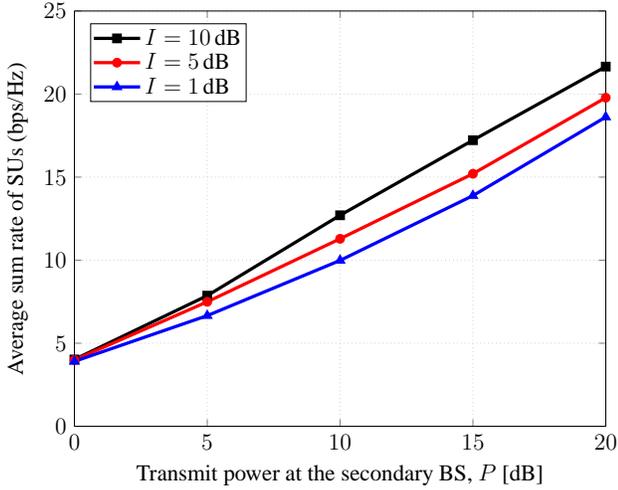}
\caption{Average sum rate of SUs with the total transmit power for different interference thresholds at PUs. The network configuration is $N=10,\, K=3,\, M=2,\, n_k=2,\, \forall k,\,$ and $ \tilde{n}_m=2,\;\,\forall m$.}
\label{fig:sumrate:transmitpower}
\end{figure}

\begin{figure}[t]
\centering
\includegraphics[width=0.45\textwidth]{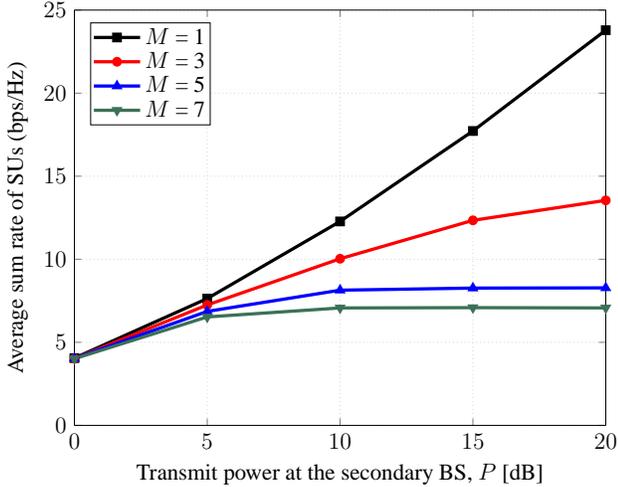}
\caption{Average sum rate of SUs with the total transmit power for different numbers of PUs. In this example, we set the network parameters as $K = 3, n_k=2,\;\forall k$, $\tilde{n}_m=2, \;\forall m, I = 5\, \mbox{dB},$ and $ N=10$.}
\label{fig:sumrate:transmitpower:PU}
\end{figure}

\begin{figure}[t]
\centering
\includegraphics[width=0.45\textwidth]{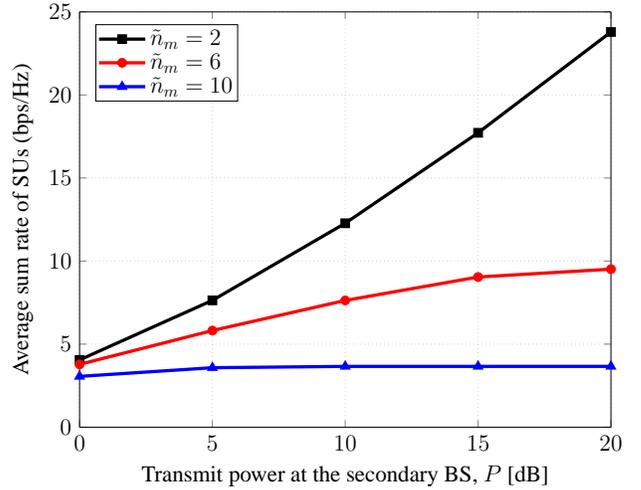}
\caption{Average sum rate of SUs with the total transmit power for different numbers of antennas at PUs. The network parameters are set to $K = 3, n_k=2,\;\forall k$, $M = 1, I = 5\, \mbox{dB},$ and $ N=10$.}
\label{fig:sumrate:transmitpower:n-m}
\end{figure}

\begin{figure}
    \begin{center}
    \begin{subfigure}[Average sum rate of SUs versus the number of transmit antennas at the secondary BS. The transmit power  is set to $P = 10$ dB.]{
        \includegraphics[width=0.45\textwidth]{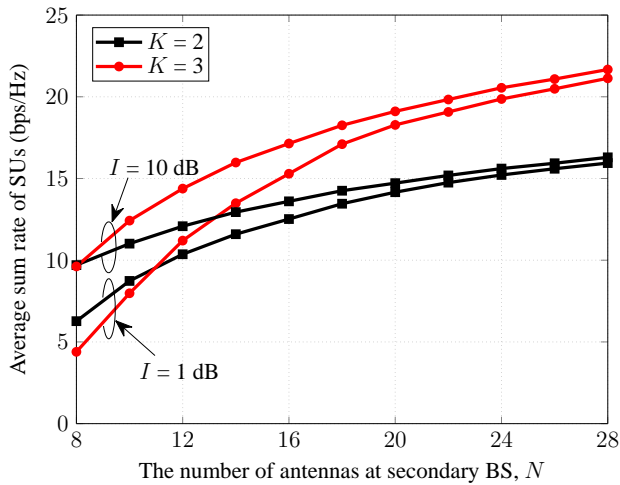}}
    		\label{fig:4:a}
				\end{subfigure}
		\hfill \\
		
		 \begin{subfigure}[Average sum rate of SUs versus the transmit power at the secondary BS. The interference threshold  is set to $I = 1$ dB.]{
        \includegraphics[width=0.45\textwidth]{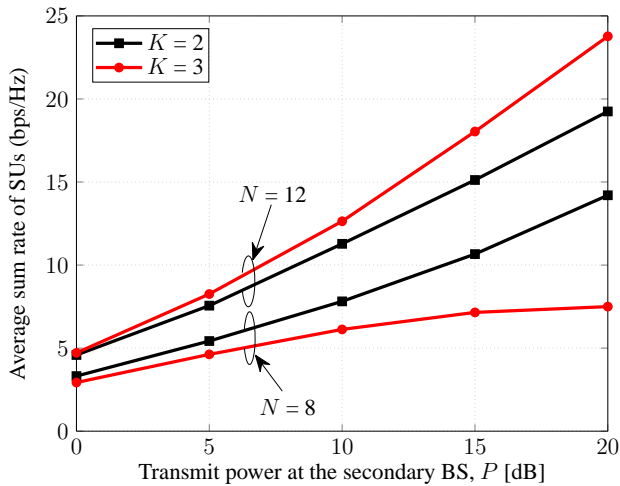}}
        \label{fig:4:b}
    \end{subfigure}
	  \caption{Average sum rate of SUs, (a) versus the number of transmit antennas, and (b) versus the transmit power at the secondary BS. In the cognitive network, we choose $M = 2, \tilde{n}_m=2, \;\forall m$, and $n_k=2,\;\forall k$.}\label{fig:sumrate:transmitpower:N}
\end{center}
\end{figure}

\begin{figure}[t]
\centering
\includegraphics[width=0.45\textwidth]{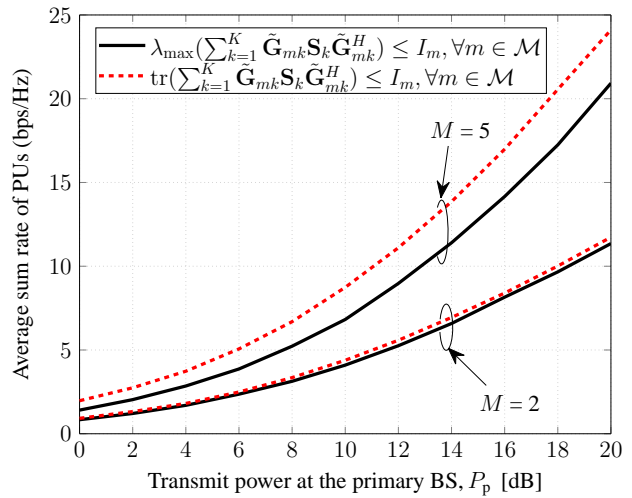}
\caption{Average sum rate of PUs with the total transmit power at the primary BS, $P_{\text{p}}$. For both the  systems, we set network parameters as $\tilde{N} = N = 10$, $K =3$, $\tilde{n}_m = 2,\,\forall m$, $n_k = 2,\,\forall k$, $I$ = 1 dB, and $P$ = 10 dB. }
\label{fig:sumrate:transmitpower:PUsystem}
\end{figure}

\begin{figure}[t]
\centering
\includegraphics[width=0.45\textwidth]{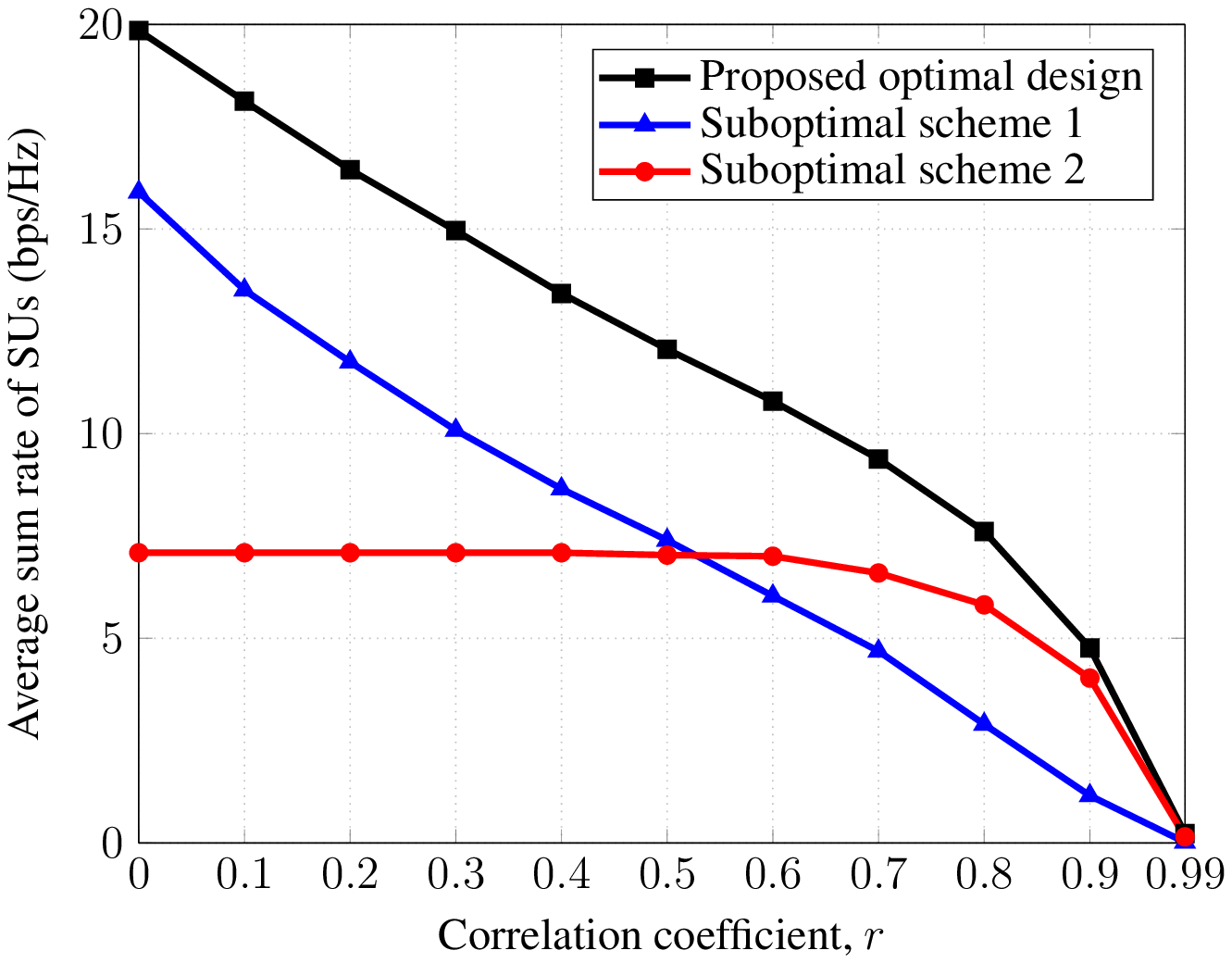}
\caption{Average sum rate of SUs as a function of the correlation coefficient, $r$. The network parameters are set to $ N = 10$, $K =3$, $n_k = 2,\,\forall k$, $M = 2$, $\tilde{n}_m = 2,\,\forall m$, $I$ = 5 dB, and $P$ = 20 dB. }
\label{fig:sumrate:correlation}
\end{figure}

In Fig. \ref{fig:Convergencebehavior:Iteration:N} we show the convergence rate of the proposed algorithm for the simulation settings given in the caption.    In particular, we plot the convergence rate of the proposed barrier method for different numbers of transmit antennas at the secondary BS in Fig. 2(a),   and  for different numbers of PUs in Fig. 2(b). The initial values for the primal and dual variables in Algorithm 1 are randomly generated. We can see that Algorithm 1 shows a very fast convergence rate when it is approaching the optimal solution. We note that this convergence result can  be expected for an algorithm based on Newton's method. We can also see that its convergence rate is slightly sensitive to the network configurations.

Fig. \ref{fig:sumrate:transmitpower} shows the impact of the interference thresholds on the SR of the secondary system with respect to the transmit power constraint.  As can be seen from Fig.~\ref{fig:sumrate:transmitpower}, decreasing the interference threshold $I$  degrades the system performance. The performance gains achieved  for higher interference threshold are due to the fact that more transmit power can be used when the interference threshold constraints are set to a higher value. Specifically, the SRs of the secondary system with  different levels of the interference threshold constraint are of the same order in the low transmit power regime. The reason is that the considered interference thresholds still allow the secondary system to operate with nearly full transmit power.

Fig. \ref{fig:sumrate:transmitpower:PU} depicts the impact of the number of PUs on the performance of the cognitive network.  We can see that the average SR of secondary system is degraded as the number of PUs $M$ increases, and the degradation is dramatic for large transmit power at the secondary BS. For low power regime, the secondary BS mainly focuses on maximizing the SR of the secondary system and pays little attention to the primary system since the interference constraints are likely satisfied for all PUs. Accordingly, the SRs are quite similar in such cases for different numbers of PUs as observed in Fig. \ref{fig:sumrate:transmitpower:PU}. As the transmit power becomes sufficiently large, the BS needs to avoid transmitting its signals over the spatial space of PUs. In this case, the degree of freedom left for the secondary system is reduced if the number of PUs increases, and this explains  the noticeable differences of SRs for large numbers of PUs.

In Fig. \ref{fig:sumrate:transmitpower:n-m}, we consider the case of a single PU (i.e., $M = 1$) and examine the impact of the number of antennas at the PU on the performance of the secondary system. All the remaining parameters are set to the same values as those in Fig. \ref{fig:sumrate:transmitpower:PU}. Similarly, it is observed that increasing the number of antennas at the PU severely deteriorates the average SR of the secondary performance, especially for large transmit power at the BS. We can intuitively explain this observation by simply treating each transmit antenna of the PU as a single PU and recalling the discussions presented for Fig. \ref{fig:sumrate:transmitpower:PU}.

In the next numerical example, we plot the average SR versus the number of transmit antennas and versus the transmit power at the secondary BS  in Fig. \ref{fig:sumrate:transmitpower:N}(a) and Fig. \ref{fig:sumrate:transmitpower:N}(b), respectively. As expected, the average SR  improves as the number of transmit antennas increases since more degrees of freedom are added to the secondary system. A large interference threshold (i.e., $I$ = 10 dB in Fig. \ref{fig:sumrate:transmitpower:N}(a)) allows the secondary system to fully exploit the available degrees of freedom. In such a case, more SUs result in more multiuser diversity in the secondary system, which accordingly increases the average SR.  On the other hand, for a small interference threshold in both the Fig. \ref{fig:sumrate:transmitpower:N}(a) and Fig. \ref{fig:sumrate:transmitpower:N}(b), the  average SR  first decreases for small $N$ and then increases for large $N$ when more SUs are served. Another interesting point is that the gap between the curves  is reduced, as the number of antennas at the secondary BS increases. We recall that for small $N$ and small interference threshold, the secondary system lacks  degree of freedom for leveraging multiuser diversity.

As  mentioned earlier, interference from the secondary BS to a PU   is a matrix which is non-degrading and there are possibly several ways  to control the interference. Thus it is very interesting to investigate how different types of interference constraints affect the performance of the primary system.  In this numerical experiment, we consider a primary system where the primary BS is equipped with $\tilde{N}$ antennas and employs ZF precoding with PAPCs \cite{Tran:MIMO:13}. The interference from the secondary system is simply treated as background thermal noise at the PUs. Under this setup, we plot the average achieved SR of the primary system for two different types of interference constraints, one considered in this paper shown in \eqref{eq:BD:Sumrate:PAPC:relaxed_b} and the other obtained by replacing \eqref{eq:BD:Sumrate:PAPC:relaxed_b} with $\lambda_{\max}(\sum_{k=1}^{K}\tilde{\mathbf{G}}_{mk} \mathbf{S}_k \tilde{\mathbf{G}}_{mk}^H) \leq I_m, \forall m \in \mathcal{M}$. In other words, we impose that the largest eigenvalue of the interference matrix should be smaller than a predetermined threshold. As can be seen from  Fig. \ref{fig:sumrate:transmitpower:PUsystem}, the primary system achieves a higher SR with the sum of all eigenvalues  than with the largest eigenvalue  of the interference matrix. This implies that the  new type of interference constraint is stricter than our considered one. The gaps between the two schemes are negligible for small number of receive antennas at the PUs (i.e., $M = 2$ in Fig. \ref{fig:sumrate:transmitpower:PUsystem}). The reason is that in such cases the number of eigenvalues in the interference matrix is small and thus the trace of the interference matrix is mostly contributed by the largest eigenvalue. However, when $M$  becomes large, the impact of smaller eigenvalues of the interfering matrix can be comparable to the largest one. Thus, imposing a constraint on the maximum eigenvalue may not appropriately characterize the interference situation. In other words, the interference can be more severe that it is supposed to be. This leads to a reduction on the achieved SR of the primary system as shown in Fig. \ref{fig:sumrate:transmitpower:PUsystem}.

Finally, we study the SR of SUs as a function of the correlation coefficient $r$ in Fig.~\ref{fig:sumrate:correlation}. In particular, we  compare the performance of the proposed design with that of two suboptimal schemes. For suboptimal scheme 1, the ZF precoding at the secondary BS is chosen to eliminate the interference at both SUs and PUs. For suboptimal scheme 2, the ZF precoding for the $k$th SU is given by $\mathbf{T}_k = \bar{\mathbf{V}}_k\mathbf{\dot{V}}_k\boldsymbol{\Phi}^{1/2}_k$ where $\mathbf{\dot{V}}_k$ contains the $n_k$ singular vectors of $\tilde{\mathbf{H}}_k$, i.e., it comes from a compact SVD of $\tilde{\mathbf{H}}_k$: $\tilde{\mathbf{H}}_k = \mathbf{\dot{U}}_k\mathbf{\dot{D}}_k\mathbf{\dot{V}}_k^H$, and $\boldsymbol{\Phi}_k\in\mathbb{C}^{n_k\times n_k}$ is the solution of \eqref{eq:BD:Sumrate:PAPC:relaxed} \cite{Tran:MIMO:13}. The results show that the SR of all the schemes decreases as the correlation coefficient increases.  This is because a high  correlation coefficient reduces the spatial diversity of the secondary system. Importantly, the proposed design shows superior performance compared to the others. Another interesting observation is that, in the strong correlation regime, the sum rate of the suboptimal scheme 2 is larger than that of the suboptimal scheme 1  and vice versa in the weak correlation regime. The reason is that as the spatial correlation at the transmitter and receiver is stronger, the channels become more directional and thus interference isolation can be achieved physically. Accordingly, the effect of the secondary system on the primary one will be small.  Consequently, the suboptimal scheme 2 has more degrees of freedom when designing the precoders for the secondary system as it does not force those precoders to be in the null space of the PU channels.

\section{Conclusions}\label{Conclusion}
In this paper, we have considered the sum rate maximization problem for downlink transmission of MIMO CR networks in the presence of
 multiple SUs and  PUs. The design problem is subject to per-antenna power constraints at the secondary BS and interference constraints at the PUs. Adopting ZF precoding technique and using a rank relaxation method which is shown to be tight, we have transformed the problem of SR maximization into the one of finding a saddle point of a convex-concave program. Then a computationally efficient algorithm has been proposed based on a barrier method to solve the resulting convex-concave problem, exploiting its special properties. The proposed algorithm was numerically shown to have a superlinear convergence behavior which is almost independent of the problem size. Through numerical experiments, we have illustrated how the performances of the secondary and the primary systems vary with the type of the interference constraints considered in the paper. In particular, we have shown that the performance of the secondary system degrades significantly and reaches a saturated value when either the number of primary users or the number of antennas at the primary user is large. Also, we have discussed different ways of controlling amount of the interference caused by the secondary system.  Specifically, we have concluded that using the trace of the interference matrix may reflect the interference situation better, compared to using the largest eigenvalue.

\appendix[Proof of Theorem \ref{thm:BC-MAC-duality}]{}
To prove Theorem 1, we will follow the same steps as those in \cite{Tran:MISO:13} but customize them to our considered problem.  Particularly we show  that  \eqref{eq:BD:Sumrate:dualMAC} is the dual problem of the relaxed problem of \eqref{eq:BD:Sumrate:PAPC:relaxed}. Let us start by writing the partial Lagrangian function of the relaxed problem of  \eqref{eq:BD:Sumrate:PAPC:relaxed}  as
\begin{equation}\small\label{eq:larg:expression}
\begin{aligned}
&\mathcal{L}\left(\{\mathbf{S}_k\}, \{\eta_n\}, \{\lambda_m\}\right)=\sum_{k=1}^K\log|\mathbf{I}+\tilde{\mathbf{H}}_k\mathbf{S}_k\tilde{\mathbf{H}}_k^H|\\
 &+\sum_{n=1}^N\eta_n\Bigl(P_n-\negmedspace\sum_{k=1}^K\tr\bigl(\mathbf{S}_k\mathbf{B}_k^{(n)}\bigr)\Bigr)+\sum_{m=1}^M\lambda_m\Bigl(I_m-\tr\bigl(\mathbf{S}_k\Dot{\mathbf{G}}_{mk}\bigr)\Bigr)
\end{aligned}
\end{equation}
where $\mathbf{B}_k^{(n)}\triangleq\bar{\mathbf{D}}_k^H\bar{\mathbf{D}}_k$, $\bar{\mathbf{D}}_k=[\mathbf{0}_{n-1}^T\quad 1 \quad \mathbf{0}_{N-n}^T]\bar{\mathbf{V}}_k$ and $\Dot{\mathbf{G}}_{mk}\triangleq\tilde{\mathbf{G}}_{mk}^H\tilde{\mathbf{G}}_{mk}$. $\{\eta_n\}$ and $\{\lambda_m\}$ are Lagrangian multipliers corresponding to the constraints in \eqref{eq:BD:Sumrate:PAPC:relaxed_a} and  \eqref{eq:BD:Sumrate:PAPC:relaxed_b}, respectively. From the dual problem, the dual objective of \eqref{eq:BD:Sumrate:PAPC:relaxed} is given by
\begin{equation}
\mathcal{D}(\{\eta_n\}, \{\lambda_m\})= \underset{\left\{\mathbf{S}_k\succeq\mathbf{0}\right\}}{\maxi}\mathcal{L}\left(\{\mathbf{S}_k\}, \{\eta_n\}, \{\lambda_m\}\right).
\label{eq:dual}
\end{equation}
To solve the maximization in \eqref{eq:dual} for a given set of Lagrangian multipliers $\left(\{\eta_n\}, \{\lambda_m\}\right)$, we first rewrite the Lagrangian function as
\begin{equation}
\begin{aligned}
&\mathcal{L}\left(\{\mathbf{S}_k\}, \boldsymbol{\eta}, \boldsymbol{\lambda}\right)=\sum_{k=1}^K\log|\mathbf{I}+\tilde{\mathbf{H}}_k\mathbf{S}_k\tilde{\mathbf{H}}_k^H|\\
&\qquad\qquad-\sum_{k=1}^K\tr\bigl(\mathbf{S}_k\mathbf{\Omega}_k\bigr)+\mathbf{\bar{p}}^T\boldsymbol{\eta}+\boldsymbol{\bar{I}}^T\boldsymbol{\lambda}
\end{aligned}\label{eq:Lagragian_rewrite}
\end{equation}
where $\mathbf{\Omega}_k \triangleq \sum_{n=1}^N\eta_n\mathbf{B}_{k}^{(n)}+\sum_{m=1}^M\lambda_m\Dot{\mathbf{G}}_{mk}=\mathbf{\bar{V}}_k^H\left(\diag(\boldsymbol{\eta})+\sum_{m=1}^M\lambda_m\mathbf{G}_{m}^H\mathbf{G}_{m}\right)\mathbf{\bar{V}}_k$, $\mathbf{\bar{p}}=[P_1, P_2,\cdots, P_N]^T$, $\boldsymbol{\eta} = [\eta_1, \eta_2,\cdots, \eta_N]^T$, $\boldsymbol{\bar{I}}=[I_1, I_2, \cdots, I_M]^T$, and $\boldsymbol{\lambda}=[\lambda_1, \lambda_2,\cdots, \lambda_M]^T$.
Since $\mathbf{\Omega}_k$ is invertible\footnote{It implies that $\{\eta_n\}$ or $\{\lambda_m\}$  must be positive. This is because of all power at the secondary BS should be used up or the interference constraint should be met at the optimum, i.e., $\{\eta_n^*\}>0$ or $\{\lambda_m^*\}>0$.}, let $\mathbf{\bar{S}}_k=\mathbf{\Omega}_k^{1/2}\mathbf{S}_k\mathbf{\Omega}_k^{1/2}$. The maximization in \eqref{eq:Lagragian_rewrite} is rewritten as
\begin{equation}
\begin{aligned}
\mathcal{L}\left(\{\mathbf{\bar{S}}_k\}, \boldsymbol{\eta}, \boldsymbol{\lambda}\right)&=\sum_{k=1}^K\log|\mathbf{I}+\tilde{\mathbf{H}}_k\mathbf{\Omega}_k^{-1/2}\mathbf{\bar{S}}_k\mathbf{\Omega}_k^{-1/2}\tilde{\mathbf{H}}_k^H|\\
                                                                &\quad-\sum_{k=1}^K\tr\bigl(\mathbf{\bar{S}}_k\bigr)+\mathbf{\bar{p}}^T\boldsymbol{\eta}+\boldsymbol{\bar{I}}^T\boldsymbol{\lambda}.
\end{aligned}\label{eq:Lagragian_rivised}
\end{equation}
To compute the dual objective, let $\mathbf{U}_{k}\mathbf{D}_{k}\mathbf{V}_{k}^{H}$ be a SVD of $\tilde{\mathbf{H}}_{k}\boldsymbol{\Omega}_{k}^{-1/2}$ where $\mathbf{D}_{k}$ is square and diagonal. Then, we can apply a result  in \cite[Appendix A]{Vishwanath} to find $\mathcal{D}(\boldsymbol{\eta}, \boldsymbol{\lambda})$ as
\begin{IEEEeqnarray}{rCl}
\mathcal{D}(\boldsymbol{\eta}, \boldsymbol{\lambda})&= \underset{\{\mathbf{Q}_k\}\succeq\mathbf{0}}{\maxi}\sum_{k=1}^K\log|\mathbf{I}+\mathbf{\Omega}_k^{-1/2}\tilde{\mathbf{H}}_k^H\mathbf{Q}_k\tilde{\mathbf{H}}_k\mathbf{\Omega}_k^{-1/2}|\nonumber\\
                                                               &\quad -\sum_{k=1}^K\tr\bigl(\mathbf{Q}_k\bigr)+\mathbf{\bar{p}}^T\boldsymbol{\eta}+\boldsymbol{\bar{I}}^T\boldsymbol{\lambda}
\label{eq:dual_equivalent}
\end{IEEEeqnarray}
where the relationship between $\mathbf{\bar{S}}_k$ and $\mathbf{Q}_k$ is given by
\begin{equation}\begin{aligned}
&\mathbf{\bar{S}}_k=\mathbf{V}_k\mathbf{U}_k^H\mathbf{Q}_k\mathbf{U}_k\mathbf{V}_k^H\\
&\mathbf{Q}_k=\mathbf{U}_k\mathbf{V}_k^H\mathbf{\bar{S}}_k\mathbf{V}_k\mathbf{U}_k^H.
\end{aligned}\label{eq:relation:SQ}
\end{equation}
From \eqref{eq:dual_equivalent} we can further write $\mathcal{D}(\boldsymbol{\eta}, \boldsymbol{\lambda})$ as
\begin{IEEEeqnarray}{rCl}
\mathcal{D}(\boldsymbol{\psi})&&= \underset{\{\mathbf{Q}_k\}\succeq\mathbf{0}}{\maxi}\sum_{k=1}^K\log|\mathbf{I}+\mathbf{\Omega}_k^{-1/2}\tilde{\mathbf{H}}_k^H\mathbf{Q}_k\tilde{\mathbf{H}}_k\mathbf{\Omega}_k^{-1/2}|\nonumber\\
                              &&\qquad                                  -\sum_{k=1}^K\tr\bigl(\mathbf{Q}_k\bigr)+\mathbf{p}^T\boldsymbol{\psi}\nonumber\\
															&&= \underset{\{\mathbf{Q}_k\}\succeq\mathbf{0}}{\maxi}\sum_{k=1}^K\log\frac{|\mathbf{\bar{V}}_k^H\mathbf{\Lambda}\mathbf{\bar{V}}_k+\tilde{\mathbf{H}}_k^H\mathbf{Q}_k\tilde{\mathbf{H}}_k|}{|\mathbf{\bar{V}}_k^H\mathbf{\Lambda}\mathbf{\bar{V}}_k|}\nonumber\\
															&&\qquad-\sum_{k=1}^K\tr\bigl(\mathbf{Q}_k\bigr)+\mathbf{p}^T\boldsymbol{\psi}													
\label{eq:dual_rewrite}
\end{IEEEeqnarray}
where $\mathbf{\Lambda}=\diag(\boldsymbol{\eta})+\sum_{m=1}^M\lambda_m\mathbf{G}_{m}^H\mathbf{G}_{m}$, $\mathbf{p}=[\mathbf{\bar{p}}^T\quad \boldsymbol{\bar{I}}^T]^T$, and $\boldsymbol{\psi}=[\boldsymbol{\eta}^T\quad\boldsymbol{\lambda}^T]^T$.

Now the dual problem of \eqref{eq:BD:Sumrate:PAPC:relaxed} is given by
\begin{equation}\begin{aligned}
&\mini\mathcal{D}(\boldsymbol{\psi})\\
&\st \boldsymbol{\psi}\geq 0.
\end{aligned}\label{eq:appendix:38}\end{equation}
or equivalently
\begin{equation}
\begin{array}{rl}
&\underset{\boldsymbol{\psi}\geq 0}{\mini}\;\underset{\{\mathbf{Q}_{k}\}\succeq\mathbf{0}}{\maxi}  \sum_{k=1}^K\log\frac{|\mathbf{\bar{V}}_k^H\mathbf{\Lambda}\mathbf{\bar{V}}_k+\tilde{\mathbf{H}}_k^H\mathbf{Q}_k\tilde{\mathbf{H}}_k|}{|\mathbf{\bar{V}}_k^H\mathbf{\Lambda}\mathbf{\bar{V}}_k|} \\
&\qquad\qquad\qquad-\sum_{k=1}^K\tr\bigl(\mathbf{Q}_k\bigr)+\mathbf{p}^T\boldsymbol{\psi}
\end{array}\label{eq:BD:Sumrate:dualMAC:proof}
\end{equation}
To arrive at a minimax program given in \eqref{eq:BD:Sumrate:dualMAC}, we introduce  another optimization variable $\vartheta\geq 0$ and rewrite \eqref{eq:BD:Sumrate:dualMAC:proof}  as
\begin{equation}
\begin{array}{rl}
\underset{\boldsymbol{\psi}\geq 0}{\mini}\;\underset{\vartheta\geq0, \{\mathbf{Q}_{k}\}\succeq\mathbf{0}}{\maxi} & \sum_{k=1}^K\log\frac{|\mathbf{\bar{V}}_k^H\mathbf{\Lambda}\mathbf{\bar{V}}_k+\tilde{\mathbf{H}}_k^H\mathbf{Q}_k\tilde{\mathbf{H}}_k|}{|\mathbf{\bar{V}}_k^H\mathbf{\Lambda}\mathbf{\bar{V}}_k|}\\
                                           &\qquad -\vartheta P+\mathbf{p}^T\boldsymbol{\psi}\\
\st   &  \sum_{k=1}^K\tr\bigl(\mathbf{Q}_k\bigr)\leq \vartheta P.
\end{array}\label{eq:BD:Sumrate:dualMAC:proof:new}
\end{equation}
It is easy to see that \eqref{eq:BD:Sumrate:dualMAC:proof:new} is equivalent to \eqref{eq:BD:Sumrate:dualMAC:proof} since the inequality (use subequation environment and give a label to the trace($\cdot$) constraint) in \eqref{eq:BD:Sumrate:dualMAC:proof:new} must hold with equality at optimality; otherwise we can scale down $\vartheta$ to achieve a strictly larger objective. Next we make a change of variables as
\begin{equation}\begin{aligned}
&\boldsymbol{\tilde{\eta}}=\boldsymbol{\eta}/\vartheta\\
&\boldsymbol{\tilde{\lambda}}=\boldsymbol{\lambda}/\vartheta\\
&\mathbf{\tilde{Q}}_k=\mathbf{Q}_k/\vartheta.
\end{aligned}\label{eq:newvariable}
\end{equation}
and define
\begin{equation}\begin{aligned}
&\boldsymbol{\tilde{\psi}}=\boldsymbol{\psi}/\vartheta=[\boldsymbol{\tilde{\eta}}^T\quad\boldsymbol{\tilde{\lambda}}^T]^T\\
&\mathbf{\tilde{\Lambda}}=\mathbf{\Lambda}/\vartheta=\diag(\boldsymbol{\tilde{\eta}})+\sum_{m=1}^M\tilde{\lambda}_m\mathbf{G}_{m}^H\mathbf{G}_{m}.
\end{aligned}\end{equation}
We now consider $\boldsymbol{\tilde{\psi}}$ and $\mathbf{\tilde{Q}}_k$ as new optimization variables. Then, \eqref{eq:BD:Sumrate:dualMAC:proof:new}   can be equivalently expressed as
\begin{equation}
\begin{array}{rl}
\underset{\boldsymbol{\tilde{\psi}}\geq 0}{\mini}\;\underset{\vartheta\geq0, \{\mathbf{\tilde{Q}}_{k}\}\succeq\mathbf{0}}{\maxi} & \sum_{k=1}^K\log\frac{|\mathbf{\bar{V}}_k^H\mathbf{\tilde{\Lambda}}\mathbf{\bar{V}}_k+\tilde{\mathbf{H}}_k^H\mathbf{\tilde{Q}}_k\tilde{\mathbf{H}}_k|}{|\mathbf{\bar{V}}_k^H\mathbf{\tilde{\Lambda}}\mathbf{\bar{V}}_k|}\\
 &\qquad +\vartheta( \mathbf{p}^T\boldsymbol{\tilde{\psi}}-P)\\
\st  & \sum_{k=1}^K\tr\bigl(\mathbf{\tilde{Q}}_k\bigr)\leq  P.
\end{array}\label{eq:BD:Sumrate:dualMAC:proof:eqva}
\end{equation}
Clearly, the optimal dual variable $\vartheta^*$ can be obtained by considering the minimization of \eqref{eq:BD:Sumrate:dualMAC:proof:eqva} over $\boldsymbol{\tilde{\psi}}$. Hence, \eqref{eq:BD:Sumrate:dualMAC:proof:eqva} is the dual of the following problem:
\begin{equation}
\begin{array}{rl}
\underset{\boldsymbol{\tilde{\psi}}\geq 0}{\mini}\;\underset{\vartheta\geq0, \{\mathbf{\tilde{Q}}_{k}\}\succeq\mathbf{0}}{\maxi} & \sum_{k=1}^K\log\frac{|\mathbf{\bar{V}}_k^H\mathbf{\tilde{\Lambda}}\mathbf{\bar{V}}_k+\tilde{\mathbf{H}}_k^H\mathbf{\tilde{Q}}_k\tilde{\mathbf{H}}_k|}{|\mathbf{\bar{V}}_k^H\mathbf{\tilde{\Lambda}}\mathbf{\bar{V}}_k|}\\
\st  &  \sum_{k=1}^K\tr\bigl(\mathbf{\tilde{Q}}_k\bigr)\leq  P\\
		 & \mathbf{p}^T\boldsymbol{\tilde{\psi}}\leq P.
\end{array}\label{eq:BD:Sumrate:dualMAC:proof:dual}
\end{equation}
Finally, putting \eqref{eq:relation:SQ}, \eqref{eq:newvariable}, \eqref{eq:BD:Sumrate:dualMAC:proof:dual} and $\mathbf{S}_k=\mathbf{\Omega}_k^{-1/2}\mathbf{\bar{S}}_k\mathbf{\Omega}_k^{-1/2}$ together finalizes the proof.

 We now show how Theorem \ref{thm:BC-MAC-duality} can be modified to include a SPC constraint which is written as $\sum_{k=1}^K\mathrm{tr}(\mathbf{T}_k\mathbf{T}_k^H)\leq P$, where $P$ is the total transmit power at the secondary BS. By following the same steps from \eqref{eq:larg:expression} to \eqref{eq:BD:Sumrate:dualMAC:proof:dual}, we can arrive at the same convex-concave program as the one in Theorem  \ref{thm:BC-MAC-duality}, except that $\boldsymbol{\Lambda}$, $\boldsymbol{\psi}$, and $\mathbf{p}$ are changed to $\mathbf{\Lambda}=\eta\mathbf{I}+\sum_{m=1}^M\lambda_m\mathbf{G}_{m}^H\mathbf{G}_{m}$, $\boldsymbol{\psi}=[\eta\quad\boldsymbol{\lambda}^T]^T$, and $\mathbf{p}=[P\quad \boldsymbol{\bar{I}}^T]^T$, respectively. Consequently, Algorithm 1 can be applied to handle this case.

%


\end{document}